\theoremstyle{remark}
\def\bm#1{\mbox{\boldmath $#1$}}
\newtheorem{theorem}{Theorem}
\newtheorem{proposition}{Proposition}
\newtheorem{remark}{Remark}
\begin{document}
\title{Programmable Metasurface Based Multicast Systems: Design and Analysis}

\author{\IEEEauthorblockN{
Xiaoling Hu,  \emph{Student Member}, \emph{IEEE},
Caijun Zhong, \emph{Senior Member}, \emph{IEEE},
Yongxu Zhu,  \emph{Member}, \emph{IEEE},\\
Xiaoming Chen,  \emph{Senior Member}, \emph{IEEE},
and Zhaoyang Zhang, \emph{Member}, \emph{IEEE}
\thanks{X. Hu, C. Zhong, X. Chen, and Z. Zhang are with the College of Information Science and Electronic Engineering, Zhejiang University, Hangzhou, China (email: caijunzhong@zju.edu.cn).}
\thanks{Y. Zhu is with the Wolfson School of Mechanical, Electrical and Manufacturing Engineering, Loughborough University, Leicestershire, LE11 3TU, UK (Email: y.zhu4@lboro.ac.uk).}
}}

\maketitle
\begin{abstract}
This paper considers a multi-antenna multicast system with programmable metasurface (PMS) based transmitter. Taking into account of the finite-resolution phase shifts of PMSs, a novel beam training approach is proposed, which achieves comparable performance as the exhaustive beam searching method but with much lower time overhead. Then, a closed-form expression for the achievable multicast rate is presented, which is valid for arbitrary system configurations. In addition, for certain asymptotic scenario, simple approximated expressions for the multicase rate are derived. Closed-form solutions are obtained for the optimal power allocation scheme, and it is shown that equal power allocation is optimal when the pilot power or the number of reflecting elements is sufficiently large. However, it is desirable to allocate more power to weaker users when there are a large number of RF chains. The analytical findings indicate that, with large pilot power, the multicast rate is determined by the weakest user. Also, increasing the number of radio frequency (RF) chains or reflecting elements can significantly improve the multicast rate, and as the phase shift number becomes larger, the multicast rate improves first and gradually converges to a limit. Moreover, increasing the number of users would significantly degrade the multicast rate, but this rate loss can be compensated by implementing a large number of reflecting elements.
\end{abstract}
\begin{IEEEkeywords}
Programmable metasurface, multicast systems, channel estimation.
\end{IEEEkeywords}

\section{Introduction}

By the year 2022, there will be 28.5 billion networked devices, and the overall mobile data traffic will reach up to 77 exabytes per month according to Cisco Visual Networking Index forecast\cite{dl1}. The tremendous growth in the number of communication devices calls for green and energy-efficient wireless solutions. To tackle this issue, the programmable metasurface (PMS), also known as intelligent reflecting surface (IRS), has recently been proposed as a promising solution due to its potential of both low power consumption and low deployment cost\cite{huang2018energy,qingqing2019towards,huang2019reconfigurable}.

Specifically,  a typical PMS is usually a uniform planar array composed of a large number of low-cost, passive, reflecting elements (e.g. printed dipoles), each of which can independently reflects the incident wireless signal with adjustable phase shift (controlled by an attached smart controller).
By adaptively tuning the phase shift of the reflecting elements, the propagation of the incident signal can be adjusted in a desirable way, thereby realizing smart and programmable wireless environment \cite{cui2014coding,liaskos2018new}.

%
%
%

Therefore, PMS-empowered wireless communications have attracted considerable research interests from both academia and industry. In general, the applications of PMSs can be divided into two catagories. One typical application is to use the PMS as a passive relay to assist in the communication from the transmitter to the receiver\cite{wu2019beamforming,chen2019intelligent,huang2019indoor}. Specifically, the PMS is deployed between the transmitter and receiver. Each PMS is connected with a controller which communicates with the transmitter via a separate wireless control link for coordination and exchanging channel state information (CSI) and smartly adjusts the phase shifts of reflecting elements. Such communication mode is especially useful when the direct link between  the transmitter and receiver is blocked\cite{huang2018achievable,bjornson2019intelligent,ye2019joint}. For example, assuming no line-of-sight communication is present, the work \cite{huang2018achievable} investigated a PMS-aided multiple input single output (MISO) communication systems, showing that the use of PMS increases the system throughput by at least $40\%$, without requiring any additional energy consumption. Also, there are  some works studying the utilization of PMSs in the presence of direct links \cite{wu2019beamforming,yan2019passive}.
However, using the PMS as a passive relay has two main disadvantages in practical systems.
\begin{itemize}
\item First, the PMS is far from the transmitter, making it difficult to obtain information (e.g.CSIs) from the transmitter, due to its passive architecture. To tackle this problem, a two-mode PMS model was proposed in  \cite{wu2019intelligent,subrt2012intelligent}, where the PMS is equipped with a controller that switches between receiving mode for CSIs and reflecting mode for data transmission.
However, the realization of receiving mode requires the deployment of receive radio frequency (RF) chains, leading to more hardware cost.
\item Secondly, as pointed out in \cite{bjornson2019demystifying}, instead of deploying the PMS  between  the transmitter and receiver, placing the PMS right at the transmitter or receiver will cause less power loss.
\end{itemize}

To overcome these drawbacks, another more practical application  of the PMS is to use the PMS as a component of the transmitter.\footnote{There are two main advantages of deploying PMS aided transmitter compared to having an active large intelligent surface (LIS) \cite{hu2018beyond}. First, the PMS aided transmitter can be easily realized by combining traditional horn antennas with  PMSs. Besides,  the PMS aided transmitter has the advantages of low cost and low power consumption, due to the passive architecture of the PMS.
However, the limited number of RF chains in the PMS aided transmitter makes it difficult to connect too many devices, while it has been shown in \cite{hu2018beyond} that a fair small LIS can connect quite a large number of devices.} Specifically, the PMS is deployed right at the transmitter, and each PMS cooperates with a RF chain. The signal transmitted from the RF chain is reflected by the PMS with little power loss, due to very short distance between the RF chains and the PMS. Moreover, the PMS controller is connected with the base station (BS), making it easier for the PMS to access the CSI information, thereby facilitating the joint design of phase shifts and digital beamformer.
Furthermore, experimental results have demonstrated that the PMS-based transmitter is feasible \cite{tang2019wireless,tang2019programmable}.
For instance, a PMS-based transmitter presented in \cite{tang2019wireless} has realized single carrier quadrature phase shift keying (QPSK) transmission over the air, achieving a data rate of $2.048$ Mbps, which is comparable to that achieved by the conventional method but with much lower hardware complexity. Later on, the work \cite{tang2019programmable} realized a PMS-based 8-phase shift-keying (8PSK) transmitter which can achieve a higher data rate of 6.144 Mbps over the air.

However, very few works have investigated the theoretical limits of communication systems with PMS-based transmitter\cite{tang2019wireless,tang2019programmable}. Also, the existing experiments all focus on the scenario with only a single RF chain. Motivated by these observations, in this paper, we propose a PMS-based transmitter including multiple RF chains for multicast communication systems, taking into account of finite phase shifts, and present a detailed analysis on the achievable system performance. To the best of our knowledge, this is the first attempt to provide theoretical analysis for communication systems with PMS-based transmitter. The main contributions of this paper are summarized as follows:
\begin{itemize}
\item A novel channel estimation scheme including phase shift beam training and equivalent channel estimation has been proposed. Simulation result shows that the proposed phase shift beam training algorithm achieves good performance but with much lower time overhead.
\item A closed-form expression is derived for the achievable rate of individual users, which enables efficient evaluation of the multicase rate, as well as reveals the impact of key system parameters on the user rate.
\item For some asymptotic scenarios, such as large pilot power, large number of RF chains, and large number of reflecting elements, closed-form solutions are derived for the optimal power control coefficients and the corresponding multicast rate.
\end{itemize}

The remainder of the paper is organized as follows. In Section \ref{s1}, we introduce the PMS-based multicast system, while in Section \ref{s2}, we propose a channel estimation scheme including phase shift beam training and equivalent channel estimation. Then, the achievable rate is derived in Section \ref{s3}, based on which we investigate the optimal power control coefficients and give a detailed analysis on the multicast rate in Section \ref{s4}. Numerical results and discussions are provided in Section \ref{s5}, and finally Section \ref{s6} concludes the paper.

Notation: Boldface lower case and upper case letters are used for column vectors and matrices, respectively. The superscripts ${\left(\right)}^{*}$, ${\left(\right)}^{T}$, ${\left(\right)}^{H}$, and ${\left(\right)}^{-1}$ stand for the conjugate, transpose, conjugate-transpose, and matrix inverse, respectively. Also, the Euclidean norm and absolute value are denoted by $\left\| \cdot \right\|$ and $\left|\cdot\right|$, respectively. In addition, $\mathbb{E}\left\{\cdot\right\}$ is the expectation operator, and $\text{tr}\left(\cdot\right)$ represents the trace.   And, $j$ of $e^{j \theta}$ denotes the imaginary unit.
Finally, $z \sim \mathcal{CN}(0,{\delta}^{2})$ denotes a circularly symmetric complex Gaussian random variable (RV) $z$ with zero mean and variance $\delta^2$, and $z \sim \mathcal{N}(0,{\delta}^{2})$ denotes a real valued Gaussian RV.

\section{System Model}\label{s1}
We consider a single-cell multicast system as illustrated in Fig.\ref{f8}, where  the BS  equipped  with a PMS-based transmitter  communicates with a  group  of $K$ single-antenna users.

The partially connected architecture is adopted, which is realized by aligning the beam of each directional horn antenna to the corresponding sub-PMS consisting of $L=\frac{N}{N_{\text{RF}}}$  non-overlapping reflecting elements, where $N_\text{RF}$ is the number of RF chains (antennas) and $N$ is the total number of reflecting elements.
\footnote{ 
Please note, the distance between the BS and the PMS is related to the carrier wavelength. In general, a smaller carrier wavelength implies a shorter distance.}
\footnote{It is worth noting that the proposed PMS transmitter architecture is different from the hybrid analog and digital beamforming transceiver structure. First, the methods to realize the  partially connected architecture are different.  In the proposed architecture, the  partially connected architecture is realized by aligning the beam of a directional horn  antenna to the corresponding sub-metasurface, while in the hybrid architecture,  the  partially connected architecture is realized by connecting each RF chain  to a subarray via phase shifters.
Secondly, in the proposed architecture, phase shifts are realized by the passive metasurface, while in the hybrid architecture, the adjustment of signal phases is realized by phase shifters which in general require complex circuits.
Moreover, in a more general full-connected case, at each reflecting element, signals from different RF chains are first combined and then reflected with the same phase shift, while at each antenna of the full-connected hybrid architecture, the signals from different RF chains are first adjusted with different phase shifts by different phase shifters and then combined together.}


The $i$-th sub-PMS consists of $L$ reflecting elements corresponding to the $i$-th RF chain. Each element of the $i$-th sub-PMS behaves like a keyhole. During the uplink transmission period, the reflecting element combines all the received signals and re-scatters the combined signal to the $i$-th RF chains, while during the downlink period,  the reflecting element combines signal from the $i$-th RF chain and re-scatters the signal as if from a point source.

Since the PMS is close to the BS, the channel between them can be modeled by a line-of-sight (LOS) channel.
Specifically, the channel from the $i$-th antenna (RF chain) to the $i$-th sub-PMS is given by
$
{\bf g}_{\text{B2P},i}^T=\alpha_\text{B2P} {\bf a}^T_i,
$
where $\alpha_\text{B2P}$ denotes the path loss coefficient given by $
G \frac{A_{\text e}}{ 4 \pi d_\text{B2P}^2  }$  , where $G$ is the antenna gain,  $A_\text{e}$ is the effective area of each reflecting element perpendicularly to the direction of propagation, and $d_\text{B2P}$ is the distance from the BS to the PMS. ${\bf a}^T_i$ is the array response vector of the $i$-th sub-PMS, whose elements have unit amplitude.

Let ${\bf c}= {\beta[ e^{j\theta_1},..., e^{j\theta_n},..., e^{j\theta_N}]}^T$
 denote the phase shift beam, where $\theta_n \in \left[0,2\pi\right)$ and $\gamma \in [0,1]$ are phase shift and
amplitude  coefficient, respectively.  The amplitude coefficient is given by $\beta=\gamma\alpha_\text{B2P}$ with $\gamma$ depicting the energy reflection efficiency of the PMS, while
the impact of the array response vector ${\bf a}_i$ is reflected in the phase shifts of ${\bf c}$.

In practice, the reflecting elements are controlled by the digital to analog converters (DACs), hence have finite phase shifts due to limited DAC resolution. Without loss of generality, we use $\mathcal{Q}$ to denote the set of all possible values of $\theta_n$, which has a cardinality of $M_\text{ph}$. Similarly, the set of all possible phase shift beams are denoted by $\mathcal{C}$, which has a cardinality of $M={M_{\text{ph}}}^{N}$.

We assume block-fading channels, i.e., the channels remain the same during each coherence interval and vary independently between different coherence intervals. The entire communication process can be separated into two phases during each coherence interval, namely, channel estimation and multicasting transmission, which we elaborate in the ensuing sections.

\begin{figure}[!ht]
  \centering
  \includegraphics[scale=0.4]{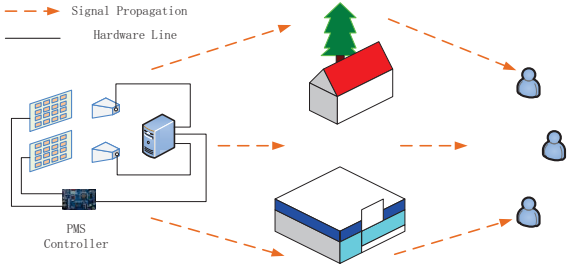}
   \caption{Model of the PMS-based transmitter aided multicast system with $N_{\text{RF}}=2, L=16, K=3$.}
  \label{f8}
\end{figure}
\section{Channel Estimation} \label{s2}

The proposed channel estimation scheme consists of two steps. In the first step, beam training is performed to acquire the optimal phase shift beam. In the second step, the equivalent channels are estimated.

\subsection{Beam Training}
Since the cardinality of phase shift beam set increases exponentially with the number of reflecting elements, the complexity of conventional exhaustive beam searching approach quickly becomes prohibitive. Responding to this, we propose a novel beam training algorithm.

Specifically, during the beam training phase, all $K$ users simultaneously transmit unmodulated frequency tones to the BS. For user $k$, the transmitted signal is denoted by $x_k=\sqrt{p_k}s$, where $p_k$ is the power and $s$ is the frequency tone of unit power.

For any $k$, we assume that
$
 \mathbb{E}\{ { \| \sqrt{p_k} {\bf g}_k \|}^2  \}=\varepsilon_r ,
$
with $\varepsilon_r$ being the average received power.  Also, ${\bf{g}}_{k}$ denotes the channel between the PMS and the $k$-th user and is defined as
$
{\bf{g}}_{k}=\sqrt{{\alpha}_{k}}{\bf{h}}_{k}
$,
where ${\alpha}_{k}$ models the large-scale fading, and ${\bf{h}}_{k}$ models the small-scale fading with elements being  independent and identically distributed (i.i.d) $\mathcal{CN}\left(0,1\right)$ RVs.
Furthermore, ${\alpha}_{k}$ is assumed to be constant and known as a priori.
After simplifying $ \mathbb{E}\{ { \| \sqrt{p_k} {\bf g}_k \|}^2 \}$, we have
$
 N\alpha_k p_k= \varepsilon_r.
$

The proposed beam training method works in a bisection manner, namely, at each stage, nearly half of the available beams will be eliminated. For instance, at the $i$-th stage, the BS chooses a pair of beams ${\bf{c}}_{(i,j)},j=1,2$, which have the weakest correlation from the current beam set $\mathcal{C}_i$. As such, the received signal after combining can be written as
\begin{align}\label{E2}
{\bf r}_{(i,j)}=\sum\limits_{k=1}^{K}{\bf C}_{(i,j)} {\bf g}_{k}x_k+{\bf n}_{(i,j)}, j=1,2,
\end{align}
where ${\bf C}_{(i,j)}  \in {\mathbb{C}}^{ N_{\text{RF}} \times N}$ is a block diagonal matrix defined by
\begin{equation} \label{E50} 
{\bf C}_{(i,j)}=\text{diag}\{ {\bf c}_{(i,j),1}^{T},\ldots, {\bf c}_{(i,j),n}^{T},\ldots,{\bf c}_{(i,j),N_{\text{RF}}}^{T}\}
\end{equation}
with ${\bf c}^{T}_{(i,j),n}\triangleq {[{ c}_{(i,j),n}^1,...,{ c}_{(i,j),n}^k,...,{c}_{(i,j),n}^L]}^T
 \in {\mathbb{C}}^{L \times 1}$ being the phase shift vector of the $n$-th  sub-PMS.

The next step is to compare the received power ${\|{\bf r}_{(i,1)} \|}^2$ with ${\|{\bf r}_{(i,2)} \|}^2$. Let $j^\star= \underset {j=1,2}{\arg} \max {\|{\bf r}_{(i,j)} \|}^2$ and  $j^{-\star}= \underset {j=1,2}{\arg} \min {\|{\bf r}_{(i,j)} \|}^2$. It is intuitive that the optimal beam is more likely to have stronger correlation with ${\bf c}_{(i,j^{\star})}$. With this key observation, the number of training beams can be approximately halved by removing the beams which have weaker correlation with ${\bf c}_{(i,j^{\star})}$ . Specifically, the beam ${\bf c}  \in  \mathcal{C}_i$ satisfying ${\bf c}_{(i,j^{\star})}^H{\bf c} \le {\bf c}_{(i,j^{-\star})}^H{\bf c}$ will be removed, and the remaining beams makes up a new beam set ${\mathcal{C}}_{i+1}$. The process then continues until the cardinality of ${\mathcal{C}}_{i+1}$ becomes one. The pseudo-code of the proposed beam training method is summarized in Algorithm \ref{A1}. \footnote{It is worth highlighting that the proposed beam training method is substantially different from the beam training method used in the traditional hybrid architecture  \cite{xiao2017millimeter}. Specifically, in the proposed algorithm, the received signal power at the BS and the correlation between different beams are exploited to choose the best beam, while in the traditional hybrid architecture, the largest received SNR at the user and the beam-refinement protocol are utilized to choose the best beam. Moreover, the proposed algorithm does not require feedback from the users, which is necessary for the hybrid architecture.}

\begin{remark}
Since  our proposed beam training method works in a bisection manner, a much lower complexity of $O(\log_2(M_\text{ph}^N))$ can be achieved, compared with the complexity of exhaustive beam searching $O(M_\text{ph}^N)$.
\end{remark}

\begin{proposition} \label{p5}
When both $M_{\text{ph}}$ and $\varepsilon_r$ are sufficiently large, the ideal phase shift beam obtained by Algorithm \ref{A1} can be approximated by
\begin{align}
&{\bf{c}}_{\text{opt}}=
{[c_{\text{opt},1}^1,...,c_{\text{opt},1}^L,...,c_{\text{opt},N_{\text{RF}}}^1,...,c_{\text{opt},N_{\text{RF}}}^L]}^T,
\end{align}
where
$
c_{\text{opt},n}^{l}=
\beta \frac{h_{\text{sum},n}^l}{\left| {  h}_{\text{sum},n}^l \right|},\ \  {h}_{\text{sum},n}^l=\sum\limits_{k=1}^{K} { h}_{k,n}^l,
$
where ${h}_{k,n}^l$ denotes the small-fading coefficient between the $k$-th user and the $l$-th reflecting element of the $n$-th sub-PMS.
\end{proposition}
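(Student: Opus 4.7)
The plan is to argue that in the joint limit of large $\varepsilon_r$ (negligible noise) and large $M_{\text{ph}}$ (phase set dense on the unit circle), the bisection procedure converges to the beam that maximizes the noiseless received power, and then to compute this maximizer in closed form and verify it matches the stated ${\bf c}_{\text{opt}}$. So the proof has two conceptually separate parts: first, a continuous-phase optimization over $\|\sum_k {\bf C}{\bf g}_k x_k\|^2$, and second, a justification that Algorithm~\ref{A1} actually finds (an arbitrarily good approximation to) this maximizer.

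For the optimization part, I would exploit the block-diagonal structure of ${\bf C}$ given in \eqref{E50} to decouple the objective across sub-PMSs: the received power becomes $\sum_{n=1}^{N_{\text{RF}}} |\sum_{k=1}^{K} {\bf c}_n^T {\bf g}_{k,n} x_k|^2$ where ${\bf g}_{k,n}$ is the sub-vector of ${\bf g}_k$ seen by the $n$-th sub-PMS. Substituting ${\bf g}_k=\sqrt{\alpha_k}{\bf h}_k$ and $x_k=\sqrt{p_k}s$, the $n$-th term becomes $\beta^2|s|^2 \bigl|\sum_l e^{j\theta_{n,l}}\sum_k \sqrt{\alpha_k p_k}\,h_{k,n}^l\bigr|^2$. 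The crucial simplification is the equal-received-power identity $N\alpha_k p_k=\varepsilon_r$, which makes $\sqrt{\alpha_k p_k}=\sqrt{\varepsilon_r/N}$ independent of $k$; this is exactly what collapses the weighted sum $\sum_k \sqrt{\alpha_k p_k}\,h_{k,n}^l$ into the unweighted $\sqrt{\varepsilon_r/N}\,h_{\text{sum},n}^l$, up to a global scalar.

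After this reduction the problem decouples completely into a per-element phase-alignment problem: choose $\theta_{n,l}$ to maximize $|\sum_l e^{j\theta_{n,l}} h_{\text{sum},n}^l|$. A direct triangle-inequality argument gives $|\sum_l e^{j\theta_{n,l}} h_{\text{sum},n}^l|\le\sum_l|h_{\text{sum},n}^l|$, with equality iff each summand is a nonnegative real, i.e.\ $e^{j\theta_{n,l}}=h_{\text{sum},n}^l/|h_{\text{sum},n}^l|$; multiplying by the PMS amplitude $\beta$ reproduces the stated ${\bf c}_{\text{opt}}$. Since $\mathcal{Q}$ is a $M_{\text{ph}}$-point quantization of $[0,2\pi)$, the best discrete phase differs from the continuous optimum by at most $\pi/M_{\text{ph}}$, so the discrete maximizer converges uniformly to the continuous one as $M_{\text{ph}}\to\infty$.

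The main obstacle, and where I would spend most of the effort, is justifying that the bisection retention rule of Algorithm~\ref{A1} really locates this maximizer. Two facts need to be combined: (i) for $\varepsilon_r$ large, the measured ordering $\|{\bf r}_{(i,1)}\|^2$ versus $\|{\bf r}_{(i,2)}\|^2$ agrees with the noiseless signal-power ordering with high probability (a straightforward noise concentration bound), and (ii) among beams with weakest mutual correlation, the one whose correlation with ${\bf c}_{\text{opt}}$ is larger necessarily produces the larger signal power, so that pruning the beams closer to the losing beam cannot discard ${\bf c}_{\text{opt}}$. Part (ii) is essentially a geometric/monotonicity statement about inner products on the torus of phase vectors and is the subtlest step; once it is in place, induction on the bisection stage together with the quantization argument yields the proposition.
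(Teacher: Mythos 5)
The optimization half of your argument is exactly the paper's proof: the paper also drops the noise using $N\alpha_k p_k=\varepsilon_r$ with $\varepsilon_r$ large, uses the block-diagonal structure of ${\bf C}$ to write $\|{\bf r}\|^2=\frac{\varepsilon_r}{N}\sum_{n}|{\bf c}_n^T\sum_k {\bf h}_{k,n}|^2$, relaxes the discrete phases to continuous ones on the grounds that $M_{\text{ph}}$ is large, and concludes by aligning the phase of each $c_n^l$ with that of $h_{\text{sum},n}^l$ (your triangle-inequality step, which the paper states as "obvious"). That part of your proposal is correct and coincides with the paper's route, including the observation that the identity $N\alpha_k p_k=\varepsilon_r$ is what turns the weighted sum over users into the unweighted $h_{\text{sum},n}^l$, and the $\pi/M_{\text{ph}}$ quantization remark is a slightly more careful version of the paper's relaxation.

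Where your proposal differs is the second half, and that is where the genuine gap sits. The paper never attempts to show that the bisection procedure of Algorithm~\ref{A1} converges to the maximizer of the received power; it simply identifies "the ideal phase shift beam obtained by Algorithm~\ref{A1}" with the solution of $\max_{{\bf c}\in\mathcal{C}}\|{\bf r}\|^2$, i.e.\ it proves a statement about the search target, not about algorithmic convergence. Your step (ii) -- that among the two least-correlated probed beams the one closer to ${\bf c}_{\text{opt}}$ necessarily yields larger signal power, and that the pruning rule ${\bf c}_{(i,j^{\star})}^H{\bf c}<{\bf c}_{(i,j^{-\star})}^H{\bf c}$ can never discard ${\bf c}_{\text{opt}}$ -- is asserted but not proved, and it is not true as a deterministic claim: the received power depends on the realized channels ${\bf h}_{k,n}$, not only on the inner product with ${\bf c}_{\text{opt}}$, so a beam better correlated with the optimum can produce smaller power at a given stage, and the optimum can be eliminated by an unlucky comparison. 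Algorithm~\ref{A1} is a heuristic (which is precisely why the proposition is phrased as an approximation and why the paper validates it only by simulation in Fig.~\ref{f7}). So the induction you sketch cannot be completed as stated; if you restrict yourself to what the paper actually proves -- characterizing the noise-free power-maximizing beam under the continuous-phase relaxation -- your argument is complete and matches the paper.
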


\begin{proof}
For notational convenience, we drop the subscript $(i,j)$ in (\ref{E2}) and we have
\begin{align} \label{E4}
{\bf r}&=\sum\limits_{k=1}^{K} {\bf C} {\bf g}_k x_k +{\bf n} =\sum\limits_{k=1}^{K} \sqrt{\alpha_k p_k} {\bf C} {\bf h}_k s+{\bf n}  \\
&\overset{(a)}{=} \sqrt{ \frac{\varepsilon_r}{N}} \sum\limits_{k=1}^{K}  {\bf C} {\bf h}_k s+{\bf n} \overset{(b)}{ \approx} \sqrt{\frac{\varepsilon_r}{N}} \sum\limits_{k=1}^{K}  {\bf C} {\bf h}_k s,\nonumber
\end{align}
where (a) is according to $
 N\alpha_k p_k= \varepsilon_r
$ and (b) follows the fact that $\frac{\varepsilon_r}{N}$ is sufficiently large.
Since the objective is to find the optimal phase shift beam ${\bf c}$ maximizing $\left\| {\bf r}\right\|$, we have the following equivalent optimization problem
\begin{align} \label{op1}
  \begin{array}{ll}
    \max\limits_{\left\{ {\bf c \in \mathcal{C}} \right\}}
&{\left\| {\bf r}\right\|}^2,
  \end{array}
\end{align}

Leveraging (\ref{E50}) and (\ref{E4}), we can express ${\left\| {\bf r}\right\|}^2$ as
\begin{align}
{\left\| {\bf r}\right\|}^2 =  \frac{\varepsilon_r}{N}{ \left\| \sum\limits_{k=1}^{K}{\bf C}{\bf h}_k \right\|}^2
=\frac{\varepsilon_r}{N} \sum\limits_{n=1}^{N_{\text{RF}}} {\left| {\bf c}_{n}^{T} \sum\limits_{k=1}^{K} {\bf h}_{k,n} \right|}^2,
\end{align}
where ${\bf h}_{k,n} $ denotes the channel vector between the $k$-th user and the $n$-th sub-PMS.

Based on the above equation, the optimization problem (\ref{op1}) can be rewritten as
\begin{align}
  \begin{array}{ll}
    \max\limits_{\left\{ {\bf c \in \mathcal{C}} \right\}}
&\sum\limits_{n=1}^{N_{\text{RF}}} { \left| {\bf c}_{n}^{T} \sum\limits_{k=1}^{K} {\bf h}_{k,n}  \right|}^2.
  \end{array}
\end{align}

Since the number of phase shifts, i.e., $M_{\text{ph}}$, is sufficiently large, we relax the elements of $\bf c$ to be complex numbers with continuous phases and fixed amplitudes, and obtain the following optimization problem:
 \begin{align}
  \begin{array}{ll}
     \max\limits_{{\bf c}}
&\sum\limits_{n=1}^{N_{\text{RF}}} {\left| {\bf c}_{n}^{T} \sum\limits_{k=1}^{K} {\bf h}_{k,n} \right|}^2,
  \end{array}
 \text{s.t.} & \begin{array}[t]{lll}
   \left| c_{n}^l\right|=\beta.
           \end{array}
\end{align}

Denote ${h}_{\text{sum},n}^l=\sum\limits_{k=1}^{K} { h}_{k,n}^l$. It is obvious that the phase of $c_{n}^{l}$ should equal to that of ${h}_{\text{sum},n}^l$, which completes the proof.
\end{proof}

 \begin{algorithm}[!t]         
\caption{Beam training algorithm}             
\label{A1}
\begin{algorithmic}
\State {\bf Initialize}:  stage number $i=0$, the training beam set of the first stage ${\mathcal{C}}_{1}=\mathcal{C}$.

\Repeat

 \State{  Set $i=i+1$.}

\State {Find $({\bf c}_{(i,1)},{\bf c}_{(i,2)})= \underset{({\bf v}_1, {\bf v}_2), {\bf v} _1,{\bf v} _2 \in {\mathcal{C}}_{i}}{\arg}  \min {\bf v} _1^H {\bf v} _2$.}

\State {The BS trains ${\bf c}_{(i,1)}$ and ${\bf c}_{(i,2)}$ respectively, and compares the received signal power  $ {\left\|{\bf r}_{(i,j)} \right\|}^2, j=1,2$.}

 \State {Let $j^\star= \underset {j=1,2}{\arg} \max {\left\|{\bf r}_{(i,j)} \right\|}^2$ and  $j^{-\star}= \underset {j=1,2}{\arg} \min {\left\|{\bf r}_{(i,j)} \right\|}^2$.}

  \State{Update the training beam set ${\mathcal{C}}_{i+1}=\left\{ {\bf c}\ |  \ {\bf c}_{(i,j^{\star})}^H{\bf c}< {\bf c}_{(i,j^{-\star})}^H{\bf c}, {\bf c} \in {\mathcal{C}}_{i}  \right\}$.
  }

\Until $\left|\mathcal{C}_{i} \right|=1$

\State output:${\bf c}_{(i,j^{\star})}$

\end{algorithmic}
\end{algorithm}

\subsection{Equivalent Channel Estimation}

Denote $\bar{\bf{h}}_k \triangleq {\bf{C}} {\bf{h}}_k  \in { \mathbb{C}}^{N_{\text{RF}} \times 1}$ and define the equivalent channel between the BS and the $k$-th user as $\bar{\bf{g}}_k= \sqrt{\alpha_k}\bar{\bf{h}}_k  \in { \mathbb{C}}^{N_{\text{RF}} \times 1}$. Note that ${\bf{C}}$ is the  phase shift matrix corresponding to the optimal phase shift beam obtained in the beam training phase.

Then we estimate the equivalent channel through uplink training, where all $K$ users simultaneously transmit orthogonal pilot sequences to the BS. Let ${\tau}_{c}$ be the length of the coherence interval (in symbols), and ${\tau}_{p}$ be the uplink training duration (in symbols) per coherence interval such that ${\tau}_{p}<{\tau}_{c}$. Denote the pilot sequence used by the $k$-th user, $k=1,2,...,K$, by $\sqrt{{\tau}_{p}}{\bm{\varphi}}_{k} \in {\mathbb{C}}^{{\tau}_{p}\times 1} $, where ${\rVert {\bm{\varphi}}_{k} \rVert}^{2}=1 $. To ensure the orthogonality of the pilot sequences, i.e. ${\bm \varphi}_i^H{\bm \varphi}_j=0, i \ne j$, it is required that $\tau_p \ge K$. Furthermore, we assume $\tau_p = K$.

Then, the ${ N_{\text{RF}} \times {\tau}_{p}}$ received pilot matrix at the BS can be expressed as
\begin{align}
{\bf{Y}}_{p}=\sqrt{{\tau}_{p} {\rho}_{p}} \sum\limits_{k=1}^{K} \bar{\bf{g}}_{k} {\bm{\varphi}}_{k}^{H}+{\bf{W}}_{p},
\end{align}
where ${\rho}_{p}$ is the normalized signal to noise ratio (SNR) of each pilot symbol, ${\bf{W}}_{p} \in {\mathbb{C}}^{N \times {\tau}_{p}}$ is the additive white Gaussian noise (AWGN) matrix, whose elements are i.i.d. $\mathcal{CN}(0,1)$ RVs.

To estimate $\bar{{\bf{g}}}_k$, we first multiply ${\bf{Y}}_{p}$ by ${\bm{\varphi}}_{k}$, which gives
\begin{align} \label{E5}
{\bf{y}}_{p,k}={\bf{Y}}_{p} {\bm{\varphi}}_{k}=\sqrt{{\tau}_{p} {\rho}_{p}}\bar{\bf{g}}_{k}+{\bf n}_{p,k},
\end{align}
where ${\bf n}_{p,k} = {\bf{W}}_{p}{\bm{\varphi}}_{k}$. The BS then adopts the minimum mean-square (MMSE) method to estimate the equivalent channel, as such, the equivalent channel $\bar{\bf{g}}_{k} $ can be decomposed as
\begin{align}
\bar{\bf{g}}_{k}={\hat{\bar{\bf{g}}}}_{k}+{\bf{e}}_{k},
\end{align}
where ${\hat{\bar{\bf{g}}}}_{k}$ is the estimation of $\bar{\bf{g}}_{k}$, ${\bf{e}}_{k}$ is the estimation error.

To obtain the distribution of the estimated equivalent channel, we first give an important proposition corresponding to the distribution of the equivalent channel.
 \begin{proposition} \label{p1}
With finite number of phase shifts, the elements of ${\bar{\bf h}}_k={\bf C}{\bf h}_k$ can be modeled as  
 i.i.d. random variables $\mathcal{CN} \left(u,\delta^2\right)$ with
\begin{align}
u& = \frac{L \beta}{2} \sqrt{\frac{\pi}{K}}  \frac{M_{\text{ph}}}{\pi} \sin\left(\frac{\pi}{M_{\text{ph}}}\right), \label{E11} \\
\delta^2 &=L \beta^2 \left\{1-\frac{\pi}{4 K}   {\left( \frac{M_{\text{ph}}}{\pi} \sin\left(\frac{\pi}{M_{\text{ph}}}\right)\right)}^2  \right\}.\label{E12}
\end{align}

\end{proposition}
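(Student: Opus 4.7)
The plan is to compute the first two moments of each entry of $\bar{\bf h}_k = {\bf C}{\bf h}_k$ in closed form, and then invoke a central limit theorem to justify the circularly symmetric complex Gaussian form. Using Proposition \ref{p5}, the phase shift applied at the $l$-th element of the $n$-th sub-PMS aligns with $\phi_l = \arg(h_{\text{sum},n}^l)$ up to a quantization error $\tilde{\phi}_l$. Since the phase of a circularly symmetric complex Gaussian is uniform on $[0,2\pi)$ and independent of its magnitude, $\tilde{\phi}_l$ is uniform on $[-\pi/M_{\text{ph}}, \pi/M_{\text{ph}}]$ and independent of $|h_{\text{sum},n}^l|$. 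The entry of interest then takes the form $(\bar{\bf h}_k)_n = \beta \sum_{l=1}^{L} e^{-j\hat{\phi}_l}\, h_{k,n}^l$, a sum of $L$ i.i.d.\ terms over $l$.

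The key preparatory step is the orthogonal decomposition $h_{k,n}^l = \tfrac{1}{K}\, h_{\text{sum},n}^l + v_{k,n}^l$, where $v_{k,n}^l \sim \mathcal{CN}(0,(K-1)/K)$ is uncorrelated with $h_{\text{sum},n}^l$ and, being jointly Gaussian, therefore independent of it. Because the quantized phase $\hat{\phi}_l$ depends only on $h_{\text{sum},n}^l$, it is independent of $v_{k,n}^l$, which kills the $v_{k,n}^l$-contribution to the mean. What remains is $\mathbb{E}[e^{-j\hat{\phi}_l}\, h_{k,n}^l] = \tfrac{1}{K}\,\mathbb{E}[|h_{\text{sum},n}^l|]\,\mathbb{E}[e^{-j\tilde{\phi}_l}]$, where the first factor is the Rayleigh mean $\sqrt{K\pi}/2$ (since $h_{\text{sum},n}^l \sim \mathcal{CN}(0,K)$) and the second evaluates, by direct integration against the uniform density, to $\tfrac{M_{\text{ph}}}{\pi}\sin(\pi/M_{\text{ph}})$. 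Multiplying by $\beta L$ reproduces (\ref{E11}).

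For the variance I would expand $|(\bar{\bf h}_k)_n|^2$ into $L$ diagonal terms and $L(L-1)$ cross terms. Each diagonal term contributes $\beta^2 \mathbb{E}[|h_{k,n}^l|^2] = \beta^2$, giving $L\beta^2$ in total. For the cross terms, the independence of $h_{\text{sum},n}^l$ across distinct reflecting elements decouples the two factors and yields $|\mathbb{E}[e^{-j\hat{\phi}_l}\, h_{k,n}^l]|^2$ each, so that the off-diagonal contribution is $L(L-1)\,(u/(L\beta))^2$. Subtracting $|u|^2$ leaves $L\beta^2 - u^2/L$, which after simplification matches (\ref{E12}). The i.i.d.\ claim across the $N_{\text{RF}}$ entries then follows directly from the independence of the sub-PMS channels ${\bf h}_{k,n}$ and of quantization across sub-PMSs, while the complex Gaussian shape itself is justified by applying the central limit theorem to the $L$ i.i.d.\ contributions inside each entry.

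The main obstacle is managing the statistical coupling between the phase $\hat{\phi}_l$ and the channel $h_{k,n}^l$, both of which depend on the shared random variable $h_{\text{sum},n}^l$; a naive split of the expectation would be invalid, and it is only the orthogonal decomposition into a component along $h_{\text{sum},n}^l$ and an independent residual $v_{k,n}^l$ that makes the subsequent factorizations legitimate. A secondary concern is the accuracy of the Gaussian approximation, which improves with $L$ but is already a good model for the moderate values of $L$ of practical interest.
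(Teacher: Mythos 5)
Your proposal is correct, and its overall skeleton is the same as the paper's: treat each entry of $\bar{\bf h}_k$ as a sum of $L$ i.i.d.\ terms, invoke the central limit theorem, obtain the factor $\frac{M_{\text{ph}}}{\pi}\sin\bigl(\frac{\pi}{M_{\text{ph}}}\bigr)$ by integrating the uniform quantization error over $\bigl(-\frac{\pi}{M_{\text{ph}}},\frac{\pi}{M_{\text{ph}}}\bigr)$, use the Rayleigh mean of $\left|h_{\text{sum},n}^l\right|$ (variance $K$) for the aligned part, and take the per-term second moment $\beta^2$ so that $\delta^2=L\beta^2-u^2/L$. The one genuinely different step is how you compute $\mathbb{E}\{c_{\text{opt},n}^l h_{k,n}^l\}$: the paper sums this quantity over the $K$ users to get $\beta\,\mathbb{E}\{|h_{\text{sum},n}^l|\}$ and then divides by $K$ via a symmetry argument, whereas you use the orthogonal decomposition $h_{k,n}^l=\frac{1}{K}h_{\text{sum},n}^l+v_{k,n}^l$ with $v_{k,n}^l$ independent of $h_{\text{sum},n}^l$. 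Your route buys something the paper leaves implicit: it justifies why the quantization error can be factored out of the product (the paper writes $\mathbb{E}\{e^{j\Delta\theta}\}\mathbb{E}\{c_{\text{opt},n}^l h_{k,n}^l\}$ without arguing independence), since in your formulation $\hat{\phi}_l$ is a function of $h_{\text{sum},n}^l$ alone and the residual $v_{k,n}^l$ is killed by independence, while the aligned part reduces to $\left|h_{\text{sum},n}^l\right|e^{j\tilde{\phi}_l}$ with magnitude and phase error independent. One small bookkeeping slip: in the cross-term count you write the off-diagonal contribution as $L(L-1)\,\bigl(u/(L\beta)\bigr)^2$, omitting the prefactor $\beta^2$; with it each cross term is $u^2/L^2$, the off-diagonal total is $(L-1)u^2/L$, and the variance $L\beta^2-u^2/L$ you state (and which matches (\ref{E12})) follows — so the final expression is right even though the intermediate line is not.
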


\begin{proof}
See Appendix \ref{A2}.
\end{proof}

\begin{remark}
From Proposition $\ref{p1}$, we can see that the deployment of the PMS can enhance the equivalent channel compared to the case without the PMS. Specifically,  the strength of the channel without the PMS is $\alpha_k$, while the strength with the PMS is given by $\alpha_k(u^2+\delta^2)$, indicating that an asymptotic gain in the order of $\mathcal{O}\left( L^2\right)$ can be achieved. 
This is  because the PMS not only achieves the phase shift beamforming gain of order $L$ but also captures an inherent aperture gain of order $L$ by collecting more signal power.
\end{remark}

 Based on  Proposition \ref{p1} and the MMSE estimation property, ${\bf{e}}_{k}$ and  ${\hat{\bar{\bf{g}}}}_{k}$ are complex Gaussian distributed, and they are independent of each other. Then, we have the following proposition:
\begin{proposition}\label{p2}
The elements of ${\hat{\bar{\bf{g}}}}_{k}$and ${\bf{e}}_{k}$ are Gaussian RVs with the distributions $\mathcal{CN}\left( u_{p,k}, \delta_{p,k}^{2}\right)$ and
$\mathcal{CN}\left( 0, \delta_{e,k}^{2}\right)$ respectively, where
$
u_{p,k}=\sqrt{\alpha_k}u,
{\delta}_{p,k}^{2}= \frac{\tau_p \rho_p \alpha_k^2 \delta^4}{1+ \tau_p \rho_p \alpha_k \delta^2}, 
{\delta}_{e,k}^{2}= \frac{\alpha_k \delta^2  }{ 1 +\tau_p \rho_p \alpha_k  \delta^2}.
$
\end{proposition}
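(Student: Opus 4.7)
The plan is to apply standard linear MMSE estimation to the projected observation model (\ref{E5}), treating the components of $\bar{\bf g}_k$ as i.i.d.\ complex Gaussian random variables with known mean and variance supplied by Proposition \ref{p1}. First I would argue that the effective noise ${\bf n}_{p,k}={\bf W}_p\boldsymbol{\varphi}_k$ has i.i.d.\ $\mathcal{CN}(0,1)$ entries. This follows because ${\bf W}_p$ has i.i.d.\ $\mathcal{CN}(0,1)$ entries and $\|\boldsymbol{\varphi}_k\|^2=1$, so each component of ${\bf n}_{p,k}$ is a unit-norm linear combination of independent standard complex Gaussians.

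Next I would combine Proposition \ref{p1} with $\bar{\bf g}_k=\sqrt{\alpha_k}\bar{\bf h}_k$ to conclude that the entries of $\bar{\bf g}_k$ are i.i.d.\ $\mathcal{CN}(\sqrt{\alpha_k}\,u,\,\alpha_k\delta^2)$, and are independent of ${\bf n}_{p,k}$. Because the prior is (jointly) Gaussian and the observation channel is linear with additive Gaussian noise, the MMSE estimator is linear and coincides with the conditional mean. Working componentwise, for a scalar model $y=\sqrt{\tau_p\rho_p}\,g+n$ with $g\sim\mathcal{CN}(\mu_g,\sigma_g^2)$ and $n\sim\mathcal{CN}(0,1)$, the MMSE estimate is
\begin{equation*}
\hat g=\mu_g+\frac{\sqrt{\tau_p\rho_p}\,\sigma_g^2}{\tau_p\rho_p\,\sigma_g^2+1}\bigl(y-\sqrt{\tau_p\rho_p}\,\mu_g\bigr),
\end{equation*}
which I would specialize to $\mu_g=\sqrt{\alpha_k}\,u$ and $\sigma_g^2=\alpha_k\delta^2$.

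Then it is just a bookkeeping step to read off the stated moments. Taking expectations gives $\mathbb{E}[\hat g]=\mu_g=\sqrt{\alpha_k}\,u=u_{p,k}$. Computing $\mathrm{Var}(\hat g)$ using $\mathrm{Var}(y)=\tau_p\rho_p\,\alpha_k\delta^2+1$ produces $\delta_{p,k}^2=\tau_p\rho_p\,\alpha_k^2\delta^4/(1+\tau_p\rho_p\,\alpha_k\delta^2)$, and the classical MMSE error-variance formula yields $\delta_{e,k}^2=\alpha_k\delta^2/(1+\tau_p\rho_p\,\alpha_k\delta^2)$. Independence of $\hat{\bar{\bf g}}_k$ and ${\bf e}_k$, and the claim that ${\bf e}_k$ has zero mean, follow from the orthogonality principle together with joint Gaussianity (orthogonality implies independence in the Gaussian case, and $\mathbb{E}[{\bf e}_k]=0$ because the estimator is unbiased for the prior mean).

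The only mildly delicate point---and the one I would flag as the main obstacle---is that Proposition \ref{p1} only gives marginal distributions and moments of the entries of $\bar{\bf h}_k$, not a full joint Gaussian distribution. To invoke the linear MMSE machinery cleanly I would either (i) treat the entries as approximately Gaussian in the same asymptotic regime under which Proposition \ref{p1} was derived, so the conditional mean inherits Gaussianity, or (ii) retreat to the linear MMSE estimator (which only requires second-order statistics), observe that it is still the affine function written above, and note that by construction ${\bf e}_k$ is uncorrelated with $\hat{\bar{\bf g}}_k$ with the stated variances; the Gaussian labels in the statement should then be interpreted as the resulting Gaussian working model, consistent with the way Proposition \ref{p1} is used throughout the paper.
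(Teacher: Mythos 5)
Your proposal is correct and follows essentially the same route as the paper's Appendix proof: both apply the Gaussian/LMMSE conditional-mean estimator to the projected observation (\ref{E5}) using the prior statistics from Proposition \ref{p1}, the paper in vector/covariance-matrix form and you componentwise, which is an equivalent computation yielding the same $u_{p,k}$, $\delta_{p,k}^{2}$, and $\delta_{e,k}^{2}$ (the paper obtains the error variance via $\text{cov}(\bar{\bf g}_k)=\text{cov}(\hat{\bar{\bf g}}_k)+\text{cov}({\bf e}_k)$, you via the classical MMSE error formula). Your closing caveat about joint Gaussianity is consistent with the paper's own ``modeled as i.i.d.'' framing of Proposition \ref{p1} and does not change the argument.
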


\begin{proof}
See Appendix \ref{A3}.
\end{proof}

\section{Achievable Rate Analysis} \label{s3}
During the multicasting phase, the BS utilizes the estimated equivalent CSI to precode the signals. To keep the processing simple, the BS adopts the  transmit matched
filter (MF) ${\bf W}={\hat{\bar{\bf G}}}^*$, then the received signal at all users is given by
\begin{align}
{\bf y}=\sqrt{\rho} { \bar{\bf{G}} }^T {\bf W} \sqrt{{\bf P}} {\bf s} +{\bf n},
\end{align}
where $\bar{\bf{G}}=[{\bf{g}}_1,...{\bf{g}}_k,...,{\bf{g}}_K]$, $\rho$ is the total average transmit power (normalized by the noise power), ${\bf P}=\text{diag}\{ \frac{\eta_1}{E \left\{ {\left\| {\bf w}_1\right\|}^2 \right\}},..., \frac{\eta_i}{E \left\{ {\left\| {\bf w}_i\right\|}^2 \right\}},..., \frac{\eta_K}{E \left\{ {\left\| {\bf w}_K\right\|}^2 \right\}} \}$ is the power control matrix
with the power control coefficient $\eta_k$, ${\bf s}={[s,...,s,...,s]}^T$ is the the data symbol vector satisfying $\mathbb{E} \{{|s|}^{2}\}=1$, and ${\bf n} \sim \mathcal{CN}\left( {\bf 0}, {\bf I}_K\right)$ denotes the noise.

 Noticing that ${ \bar{\bf{G}} }={ \hat{\bar{\bf{G}}} }+{\bf E}$, the above equation can be rewritten as
 \begin{align}
 {\bf y}= \sqrt{\rho} {\hat{\bar{\bf G}}}^T {\hat{\bar{\bf G}}}^* \sqrt{{\bf P}} {\bf s}  + \sqrt{\rho} {\bf E}^T  {\hat{\bar{\bf G}}}^* \sqrt{{\bf P}} {\bf s}  +{\bf n}.
 \end{align}

Then, the received signal at the $k$-th user is given by
\begin{align}
& y_k={\hat{\bar{\bf g}}}_k^T {\hat{\bar{\bf G}}}^* \sqrt{{\bf P}} {\bf s}  +{\bf e}_k^T  {\hat{\bar{\bf G}}}^* \sqrt{{\bf P}} {\bf s}  +n_k \\
 & \overset{(a)}{=} \sqrt{\rho} \left( {\hat{\bar{\bf g}}}_k^T+{\bf e}_k^T\right) \sum\limits_{i=1}^{K} \sqrt{\frac{\eta_i}{u_{p,i}^{2}+\delta_{p,i}^{2}}}  {\hat{\bar{\bf g}}}_i^* s
 +n_k, \nonumber
\end{align}
where (a) follows the fact that $\mathbb{E} \left\{ {\left\| {\bf w}_i\right\|}^2 \right\}= u_{p,i}^{2}+\delta_{p,i}^{2}$.

Next,  without loss of generality, let us focus on the achievable rate of the $k$-th user. We consider the realistic case where the $k$-th user does not have access to the instantaneous CSI of the effective channel gain. Instead, the detection of desired signal $s$ is based on the statistical CSI.  As such, we can rewrite $y_{k}$  as
\begin{align}
{y}_{k}
&=\underbrace{  \sqrt{\rho} \mathbb{E} \left\{  \left( {\hat{\bar{\bf g}}}_k^T+{\bf e}_k^T\right) \sum\limits_{i=1}^{K}  \sqrt{\frac{\eta_i}{u_{p,i}^{2}+\delta_{p,i}^{2}}}  {\hat{\bar{\bf g}}}_i^*   \right\} s }_{\operatorname{desired \ signal}}
 +\underbrace{{n}_{k}^{\operatorname{eff}}}_{\operatorname{effective \ noise}},
\end{align}
where
\begin{align}
{n}_{k}^{\operatorname{eff}}&=\underbrace{n_k}_{\text{noise}}+
 \sqrt{\rho} \left( {\hat{\bar{\bf  g}}}_k^T+{\bf e}_k^T\right) \sum\limits_{i=1}^{K} \sqrt{\frac{\eta_i}{u_{p,i}^{2}+\delta_{p,i}^{2}}}  {\hat{\bar{\bf g}}}_i^* s \\
&-\sqrt{\rho} \mathbb{E} \left\{  \left( {\hat{\bar{\bf g}}}_k^T+{\bf e}_k^T\right) \sum\limits_{i=1}^{K} \sqrt{\frac{\eta_i}{u_{p,i}^{2}+\delta_{p,i}^{2}}}  {\hat{\bar{\bf g}}}_i^*  \right\} s . \nonumber
\end{align}

Capitalizing on the results in \cite{marzetta2016fundamentals}, the achievable rate of the $k$-th user can be expressed as\footnote{ It is worth noting that this expression is derived under the assumption of the transmit MF and the realistic case where the users have no access to the instantaneous CSI of the effective channel gain.}
\begin{align}
{R}_{k}= {\log}_{2} \left(1+\frac{{|{A}_{k}|}^{2}}{ B_k  +1 }\right),
\end{align}
with
\begin{align}
&{A}_{k} \triangleq  \sqrt{\rho} \mathbb{E} \left\{  \left( {\hat{\bar{\bf g}}}_k^T+{\bf e}_k^T\right) \sum\limits_{i=1}^{K}  \sqrt{\frac{\eta_i}{u_{p,i}^{2}+\delta_{p,i}^{2}}}  {\hat{\bar{\bf g}}}_i^*  \right\} , \\
&{B}_{k} \triangleq \rho  \mathbb{E}  \left\{{\left| \left( {\hat{\bar{\bf g}}}_k^T+{\bf e}_k^T\right) \sum\limits_{i=1}^{K}  \sqrt{\frac{\eta_i}{u_{p,i}^{2}+\delta_{p,i}^{2}}}  {\hat{\bar{\bf g}}}_i^* \right|}^2 \right\}-{\left| A_k\right|}^2,
\end{align}
being the desired signal power and leakage power, respectively.

Then, we have the following important result:
\begin{theorem} \label{T1}
The achievable rate of the $k$-th user is given by
(\ref{E10}) on the top of the next page.
\newcounter{mytempeqncnt}
\begin{figure*}[!t]
\normalsize
\setcounter{mytempeqncnt}{\value{equation}}
 \begin{align} \label{E10}
R_k=\log_2 \left(1+\frac{  \rho  N_{\text{RF}}^2  {\left\{\sum\limits_{i=1}^{K} \sqrt{\frac{\eta_i}{u_{p,i}^{2}+\delta_{p,i}^{2}}} u_{p,k} u_{p,i}+\sqrt{\frac{\eta_k}{u_{p,k}^{2}+\delta_{p,k}^{2}}} \delta_{p,k}^{2} \right\}}^2 }    {1+\alpha_k \rho \delta^2 N_{\text{RF}} {\left( \sum\limits_{i=1}^{K} \sqrt{\frac{\eta_i u_{p,i}^2}{u_{p,i}^{2}+\delta_{p,i}^{2}}} \right)}^2
+\alpha_k\rho  \left(u^2+\delta^2\right)N_{\text{RF}} \sum\limits_{i=1}^{K}  \frac{\eta_i \delta_{p,i}^2}{u_{p,i}^{2}+\delta_{p,i}^{2}}} \right).
\end{align}
\vspace*{-0.8cm}
\end{figure*}
\end{theorem}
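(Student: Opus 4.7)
The plan is to evaluate $A_k$ and $B_k$ from their defining expectations by exploiting the independence structure supplied by Proposition~\ref{p2} together with the orthogonal-pilot training design: (i) the MMSE error ${\bf e}_k$ is zero-mean and independent of every $\hat{\bar{\bf g}}_i$; (ii) $\hat{\bar{\bf g}}_i$ and $\hat{\bar{\bf g}}_j$ are independent for $i\neq j$; (iii) within each vector the $N_{\text{RF}}$ entries are i.i.d.\ $\mathcal{CN}(u_{p,i},\delta_{p,i}^2)$, so all required first, second and fourth order moments reduce to elementary scalar computations.

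Setting $c_i\triangleq\sqrt{\eta_i/(u_{p,i}^2+\delta_{p,i}^2)}$, I would first expand $A_k=\sqrt{\rho}\sum_i c_i\,\mathbb{E}\{(\hat{\bar{\bf g}}_k+{\bf e}_k)^T\hat{\bar{\bf g}}_i^*\}$. By (i) the ${\bf e}_k^T\hat{\bar{\bf g}}_i^*$ terms vanish; by (ii) every $i\neq k$ term yields $N_{\text{RF}}u_{p,k}u_{p,i}$, while the $i=k$ term equals $\mathbb{E}\{\|\hat{\bar{\bf g}}_k\|^2\}=N_{\text{RF}}(u_{p,k}^2+\delta_{p,k}^2)$. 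Regrouping, with the extra $\delta_{p,k}^2$ piece broken off from the self term, produces exactly the numerator of (\ref{E10}).

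For the denominator I would expand $\mathbb{E}\{|X|^2\}$, where $X$ is the inner random quantity whose mean is $A_k/\sqrt{\rho}$, into the four double sums generated by the cross products of $(\hat{\bar{\bf g}}_k+{\bf e}_k)$ with its conjugate. The two mixed sums (one factor of ${\bf e}_k$, one of $\hat{\bar{\bf g}}_k$) vanish by (i). The pure-${\bf e}_k$ sum factorises as $\sum_{i,j}c_ic_j\,\mathbb{E}\{{\bf e}_k^T\,\mathbb{E}\{\hat{\bar{\bf g}}_i^*\hat{\bar{\bf g}}_j^T\}\,{\bf e}_k^*\}$; by (ii) only $i=j$ survives, giving a contribution proportional to $N_{\text{RF}}\delta_{e,k}^2\sum_i\eta_i$. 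The pure-$\hat{\bar{\bf g}}_k$ sum $\sum_{i,j}c_ic_j\,\mathbb{E}\{\hat{\bar{\bf g}}_k^T\hat{\bar{\bf g}}_i^*\hat{\bar{\bf g}}_j^T\hat{\bar{\bf g}}_k^*\}$ must be split according to the pattern of coincidences among $i$, $j$, $k$; the single case $i=j=k$ uses the scalar fourth moment $\mathbb{E}\{|z|^4\}=|u|^4+4|u|^2\delta^2+2\delta^4$ for $z\sim\mathcal{CN}(u,\delta^2)$, while the remaining cases reduce to products of first and second moments by pairwise independence. Subtracting $|A_k|^2$ cancels the deterministic mean contributions precisely and leaves a variance-type remainder.

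The main obstacle is the bookkeeping of the sub-cases in the fourth-moment sum of the previous step, and verifying that the pieces not cancelled by $|A_k|^2$ re-assemble into the two compact sums in (\ref{E10}). The two simplifying identities that make this collapse work, both immediate from Proposition~\ref{p2}, are $u_{p,k}^2=\alpha_k u^2$ and $\delta_{p,k}^2+\delta_{e,k}^2=\alpha_k\delta^2$; invoking these converts the surviving combinations back into the original $(u,\delta^2)$ pair and yields the denominator terms $\alpha_k\rho\delta^2 N_{\text{RF}}(\sum_i\sqrt{\eta_i u_{p,i}^2/(u_{p,i}^2+\delta_{p,i}^2)})^2$ and $\alpha_k\rho(u^2+\delta^2)N_{\text{RF}}\sum_i\eta_i\delta_{p,i}^2/(u_{p,i}^2+\delta_{p,i}^2)$ in closed form.
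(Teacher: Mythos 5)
Your overall route is the same as the paper's (separate moment computations for $A_k$ and for the second-order term, a case-by-case evaluation of the fourth-moment double sum, subtraction of $|A_k|^2$, and the final collapse via $u_{p,k}^2=\alpha_k u^2$ and $\delta_{p,k}^2+\delta_{e,k}^2=\alpha_k\delta^2$), and your numerator computation is correct. However, there is a genuine error in your treatment of the pure-error term. You claim that in $\sum_{i,j}c_ic_j\,\mathbb{E}\{{\bf e}_k^T\,\mathbb{E}\{\hat{\bar{\bf g}}_i^*\hat{\bar{\bf g}}_j^T\}\,{\bf e}_k^*\}$ ``only $i=j$ survives'' by independence of $\hat{\bar{\bf g}}_i$ and $\hat{\bar{\bf g}}_j$. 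That would be true only if the estimates were zero-mean, but by Proposition \ref{p2} they have mean $u_{p,i}\neq0$, so for $i\neq j$ one has $\mathbb{E}\{\hat{\bar{\bf g}}_i^*\hat{\bar{\bf g}}_j^T\}=u_{p,i}u_{p,j}{\bf 1}{\bf 1}^T$ and hence $\mathbb{E}\{{\bf e}_k^T\,\mathbb{E}\{\hat{\bar{\bf g}}_i^*\hat{\bar{\bf g}}_j^T\}\,{\bf e}_k^*\}=N_{\text{RF}}\delta_{e,k}^2u_{p,i}u_{p,j}\neq0$. The correct value of this block is
\begin{align}
N_{\text{RF}}\delta_{e,k}^2\left\{\left(\sum\limits_{i=1}^{K}c_iu_{p,i}\right)^{2}+\sum\limits_{i=1}^{K}c_i^{2}\delta_{p,i}^{2}\right\},
\qquad c_i=\sqrt{\tfrac{\eta_i}{u_{p,i}^{2}+\delta_{p,i}^{2}}},
\end{align}
whereas your version gives only $N_{\text{RF}}\delta_{e,k}^2\sum_i\eta_i=N_{\text{RF}}\delta_{e,k}^2\bigl(\sum_i c_i^2u_{p,i}^2+\sum_i c_i^2\delta_{p,i}^2\bigr)$, i.e.\ it misses the cross terms $i\neq j$ in the square.

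This omission is not cosmetic: the final denominator of (\ref{E10}) contains $\alpha_k\rho\delta^2N_{\text{RF}}\bigl(\sum_i c_iu_{p,i}\bigr)^2$, and that term is assembled precisely by adding the $\delta_{p,k}^2\bigl(\sum_i c_iu_{p,i}\bigr)^2$ contribution from the pure-estimate part to the $\delta_{e,k}^2\bigl(\sum_i c_iu_{p,i}\bigr)^2$ contribution from the pure-error part, and then invoking $\delta_{p,k}^2+\delta_{e,k}^2=\alpha_k\delta^2$. With your truncated error term the two pieces do not multiply the same square, the identity cannot be applied, and the resulting expression does not reduce to (\ref{E10}). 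Once you restore the $i\neq j$ cross terms (exactly as the paper does in its evaluation of the error contribution), the rest of your plan, including the fourth-moment case analysis for the pure-estimate part, goes through as in the paper's proof.
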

\begin{proof}
Refer to Appendix \ref{A4}.
\end{proof}

Theorem \ref{T1} presents a closed-form expression for the achievable rate which reveals the impact of key system parameters, such as the number of phase shifts, reflecting elements, RF chains and users, as well as the impact of imperfect channel estimation on the achievable rate. For instance, $R_k$ is an increasing function with respect to $N_{\text{RF}}$. Besides, it can be seen that  the desired signal power decreases with the equivalent channel estimation error, indicating that we can improve the channel estimation accuracy, for example by increasing the pilot power.

After deriving the individual rate for any user $k \in \left\{1,2,...,K \right\}$, the multicast rate $R$ can be obtained as
\begin{align}
R= \min\limits_{k=1,2,...,K } {R}_{k}.
\end{align}

\section{Power Control}   \label{s4}

To maximize the multicast rate, we formulate the following power control problem:
\begin{align} \label{E24}
  \begin{array}{ll}
    \max\limits_{\left\{ {\eta}_{k} \right\}}
& \min\limits_{k=1,...,K} {R}_{k},  \\
 \operatorname{s.t.} & \begin{array}[t]{lll}
             \sum\limits_{k=1}^{K} {\eta}_{k}  = 1 \\
             {\eta}_{k} \geq 0, k=1,...,K.
           \end{array}
  \end{array}
\end{align}

In the general setting, the above optimization problem is a non-convex problem, hence is difficult to solve. Responding to this, we consider some asymptotic regime, where closed-form solutions can be derived.

\subsection{Large pilot power }
We first consider the scenario where the pilot power is sufficiently large, and we have the following important result:
\begin{theorem} \label{t2}
As $\rho_p \rightarrow \infty $, the optimal  power control coefficients
are
$
\eta_k=\frac{1}{K}, k=1,...,K,
$
 and the  corresponding multicast rate is given by
\begin{align}
& R=  \\ 
&\log_2\! \left(\!1\!+\!\frac{K N_{\text{RF}} L^3 \tilde{u}_0^4 }{K L^2 \tilde{u}_0^2 \!\left(\!1\!-\!\tilde{u}_0^2\!\right)\!
 \!+\!\left(\!L \tilde{u}_0^2\!+\!1-\!\tilde{u}_0^2 \!\right)\!\left(\! \frac{1}{\beta^2 N_{\text{RF}} \rho \alpha} \!+\!1\!-\!\tilde{u}_0^2\!\right)  } \!\right),\nonumber
\end{align}
where $ \alpha \triangleq \underset{k=1,...,K}{\min} \alpha_k$,  $\tilde u_0 \triangleq \frac{u_0}{\beta}=  \sqrt{\frac{\pi}{4K}}  \frac{M_{\text{ph}}}{\pi} \sin\left(\frac{\pi}{M_{\text{ph}}}\right)$.
\end{theorem}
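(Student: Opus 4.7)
The plan is to specialize the rate formula (\ref{E10}) in the $\rho_p\to\infty$ regime, identify the bottleneck user, reduce the max-min problem to a single-user SINR maximization, and then perform the algebraic simplification using the closed forms of $u^2$ and $\delta^2$ from Proposition \ref{p1}.

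First I would take the limit using Proposition \ref{p2}: as $\tau_p\rho_p\to\infty$ we have $\delta_{p,k}^2\to\alpha_k\delta^2$ while $u_{p,k}^2=\alpha_k u^2$ is $\rho_p$-independent, so $u_{p,k}^2+\delta_{p,k}^2\to\alpha_k(u^2+\delta^2)$. Substituting these into (\ref{E10}), the factor $\alpha_i$ cancels between the numerator contribution $u_{p,k}u_{p,i}$ and the denominator $u_{p,i}^2+\delta_{p,i}^2$, so every summand reduces to a clean function of $\sqrt{\eta_i}$ and $\sqrt{\alpha_k}$. After collecting terms, the SINR acquires the separable form $\text{SINR}_k=\alpha_k A(\bm\eta)/[1+\alpha_k B(\bm\eta)]$, where $A$ depends on $\eta_k$ and on the aggregate $S=\sum_i\sqrt{\eta_i}$, while $B$ depends only on $S$. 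Because such expressions are strictly increasing in $\alpha_k$, the user with the smallest large-scale coefficient $\alpha=\min_k\alpha_k$ attains the minimum rate, and the max-min problem collapses to maximizing that single SINR over the simplex.

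Next I would solve this scalar problem. By Cauchy--Schwarz, $S\le\sqrt{K}$ with equality iff $\eta_k=1/K$ for all $k$, so equal allocation simultaneously maximizes $S$ and fixes $\sqrt{\eta_{k^\star}}=1/\sqrt{K}$ at its symmetric value. I would then verify that this point is the max-min optimum: any asymmetric perturbation that boosts $\sqrt{\eta_{k^\star}}$ can only change $S$ at second order (because $S$ is stationary at the symmetric point), while strictly reducing the rates of the other users that were already tied at the symmetric solution, so no perturbation can raise the minimum rate. Substituting $\eta_k=1/K$ into the simplified SINR then yields $(u^2 S+\delta^2\sqrt{\eta_{k^\star}})^2=(Ku^2+\delta^2)^2/K$ in the numerator and $1+\alpha\rho N_{\text{RF}}\delta^2[(K+1)u^2+\delta^2]/(u^2+\delta^2)$ in the denominator. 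Plugging in $u^2=L^2\beta^2\tilde u_0^2$, $\delta^2=L\beta^2(1-\tilde u_0^2)$, and $u^2+\delta^2=L\beta^2(L\tilde u_0^2+1-\tilde u_0^2)$ from Proposition \ref{p1}, and clearing the common factor $\alpha\rho N_{\text{RF}}\beta^2$ between numerator and denominator, produces the stated closed-form expression for $R$.

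The main obstacle will be rigorously justifying equal power allocation. The term $\delta^2\sqrt{\eta_k}$ in the numerator of (\ref{E10}) breaks the natural symmetry in $\bm\eta$, so one cannot appeal to a direct symmetry or concavity argument; the tradeoff between a larger $\sqrt{\eta_{k^\star}}$ for the weakest user and a smaller $S$ must be ruled out. I expect the cleanest resolution to combine the stationarity of $S$ at $\eta_k=1/K$ with a careful sign check on the partial derivatives of the other users' SINRs, or equivalently, to establish a KKT condition showing that the active rate constraints at the optimum force all users' rates to coincide, which, in view of the monotonicity in $\alpha_k$, is consistent only with the symmetric allocation.
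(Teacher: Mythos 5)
Your overall route coincides with the paper's (Appendix E): take the $\rho_p\to\infty$ limit via Proposition \ref{p2} so that $u_{p,k}^2=\alpha_k u^2$, $\delta_{p,k}^2=\alpha_k\delta^2$, identify the $\min_k\alpha_k$ user as the bottleneck, reduce the max--min problem to maximizing $S=\sum_i\sqrt{\eta_i}$ over the simplex (Cauchy--Schwarz/KKT giving $\eta_i=1/K$), and substitute $u^2=L^2\beta^2\tilde{u}_0^2$, $\delta^2=L\beta^2(1-\tilde{u}_0^2)$. However, you are missing the one step that makes this chain valid and that produces the stated formula: the paper does \emph{not} optimize the exact limiting SINR. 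It first argues that $\sum_i\sqrt{\eta_i}\,u^2\gg\sqrt{\eta_k}\,\delta^2$ (their ratio scales like $\tfrac{L\tilde{u}_0^2}{1-\tilde{u}_0^2}$ times a factor $\ge 1$) and \emph{drops} the $\sqrt{\eta_k}\,\delta^2$ term, i.e.\ passes from (\ref{E15}) to (\ref{E16}). Only after this approximation does the SINR depend on $\{\eta_i\}$ solely through $S$ and on $k$ solely through $\alpha_k$, so that the weakest user is the bottleneck for \emph{every} feasible allocation and the max--min collapses to maximizing $S$. The theorem's numerator $KN_{\text{RF}}L^3\tilde{u}_0^4$ comes precisely from $(u^2S)^2$ with the $\delta^2\sqrt{\eta_k}$ term discarded; your exact substitution $(u^2S+\delta^2\sqrt{\eta_{k^\star}})^2=(Ku^2+\delta^2)^2/K$ does not reduce to the stated expression.

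More importantly, your fallback arguments for exact optimality of equal power do not work. At $\eta_i=1/K$ the users are \emph{not} tied: with distinct $\alpha_k$ only the weakest user attains the minimum, and all others have strict slack. Hence a small transfer of power toward the weakest user increases its numerator term $\delta^2\sqrt{\eta_{k^\star}}$ at first order while decreasing $S$ only at second order (stationarity of $S$ at the symmetric point works \emph{against} you here), so for the exact asymptotic objective the minimum rate can be strictly improved and equal allocation is generally not optimal. Likewise, a KKT condition forcing all rates to coincide would require allocating \emph{more} power to weaker users, which contradicts, rather than implies, the symmetric solution (compare Theorem \ref{t3}, where exactly this happens). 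In short, the theorem's equal-power conclusion and closed form hold for the approximated rate (\ref{E16}), justified when $L$ is large; to complete your proof you must either import that approximation explicitly or restate the result as an approximation, after which the rest of your argument goes through as in the paper.
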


\begin{proof}
Refer to Appendix \ref{A5}.
\end{proof}

Theorem \ref{t2} shows that, with large pilot power, the multicast rate is an increasing function with respect to $L$. This is because the equivalent channel can be enhanced by increasing the number of reflecting elements. Also, as the amplitude
reflection coefficient increases, the achievable rate becomes larger, due to the fact that larger amplitude reflection coefficient implies less power loss when the transmit signal is reflected by the PMS.
In addition, the multicast rate is a decreasing function with respect to $K$. This is reasonable because with fewer users, highly directional beams
can be obtained. Furthermore, the multicast rate is constrained by the large-fading coefficient of the weakest user, but this negative effect of the weakest user can be compensated by increasing the number of reflecting elements or RF chains.


\subsection{A large number of RF chains}
\begin{theorem}\label{t3}
When $L$ is fixed while $N_{\text{RF}} \rightarrow \infty$, the optimal   power control coefficients
are
\begin{align} \label{E20}
\eta_k=\frac{\alpha_k \left(u_{p,k}^{2} +\delta_{p,k}^2\right) }{\phi  \delta_{p,k}^4},k=1,...,K,
\end{align}
where $\phi=\sum\limits_{k=1}^{K} \frac{ \alpha_k  \left(u_{p,k}^{2} +\delta_{p,k}^2 \right)}{ \delta_{p,k}^4} $.
And the corresponding   multicast rate is given by
 \begin{align}
&R=\\
&\log_2 \left(1\!+\!\frac{N_{\text{RF}} {\left\{ \frac{u^2}{\delta_2} \sum\limits_{k=1}^{K}\frac{1\!+\!\tau_p \rho_p \alpha_k \delta^2}{\tau_p \rho_p \alpha_k \delta^2}  \!+\!1\right\}}^2 }
{\frac{u^2}{\delta^2}{\left( \sum\limits_{k=1}^{K}\frac{1\!+\!\tau_p \rho_p \alpha_k \delta^2}{\tau_p \rho_p \alpha_k \delta^2}\right)}^2
\!+ \!\left( \frac{u^2}{\delta^2}\!+\!1 \right) \sum\limits_{k=1}^{K}\frac{1\!+\!\tau_p \rho_p \alpha_k \delta^2}{\tau_p \rho_p \alpha_k \delta^2}} \right).\nonumber
\end{align}
\end{theorem}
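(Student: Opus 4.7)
The plan is to push the expression in Theorem~\ref{T1} to the asymptotic regime $N_{\text{RF}}\!\to\!\infty$, reduce the resulting SINR to a form with isolated $k$-dependence, and then solve the max-min problem explicitly.

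First, I note that $u$, $\delta^2$, $u_{p,k}$ and $\delta_{p,k}^2$ given by Propositions~\ref{p1} and \ref{p2} are all independent of $N_{\text{RF}}$. Plugging these into \eqref{E10}, the numerator scales as $\rho N_{\text{RF}}^2$, while both non-constant terms in the denominator scale as $\rho N_{\text{RF}}$ and the $+1$ becomes negligible. Using the identity $u_{p,k}^2=\alpha_k u^2$ to factor $\alpha_k$ out of both numerator and denominator, the limiting SINR collapses to the compact form
\begin{equation*}
\mathrm{SINR}_k \;=\; N_{\text{RF}}\,u^2\,\frac{\bigl(A(\boldsymbol\eta)+B_k(\boldsymbol\eta)/u_{p,k}\bigr)^2}{C(\boldsymbol\eta)},
\end{equation*}
where $A=\sum_i\sqrt{\eta_i/(u_{p,i}^2+\delta_{p,i}^2)}\,u_{p,i}$ and $C=\delta^2 A^2+(u^2+\delta^2)\sum_i \eta_i\delta_{p,i}^2/(u_{p,i}^2+\delta_{p,i}^2)$ do \emph{not} depend on $k$, while the sole $k$-dependent contribution is the non-negative term $B_k/u_{p,k}=\sqrt{\eta_k/(u_{p,k}^2+\delta_{p,k}^2)}\,\delta_{p,k}^2/u_{p,k}$.

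Second, because $A$ and $C$ are shared across users, the per-user rate is minimised at $N_{\text{RF}}u^2(A+\min_k B_k/u_{p,k})^2/C$. A standard max-min argument then dictates that at the optimum the quantities $B_k/u_{p,k}$ must be equalised across $k$. Imposing that $\sqrt{\eta_k}\,\delta_{p,k}^2/\sqrt{\alpha_k(u_{p,k}^2+\delta_{p,k}^2)}$ be constant in $k$, together with the simplex constraint $\sum_k\eta_k=1$, pins down uniquely $\eta_k=\alpha_k(u_{p,k}^2+\delta_{p,k}^2)/(\phi\,\delta_{p,k}^4)$, matching \eqref{E20}. Substituting back into $A$, $B_k/u_{p,k}$ and $C$, the ratios $u_{p,i}^2/\delta_{p,i}^2$ telescope via Proposition~\ref{p2} to $(u^2/\delta^2)(1+\tau_p\rho_p\alpha_i\delta^2)/(\tau_p\rho_p\alpha_i\delta^2)$, so the auxiliary sum $S\triangleq\sum_k(1+\tau_p\rho_p\alpha_k\delta^2)/(\tau_p\rho_p\alpha_k\delta^2)$ appears naturally in both numerator and denominator, producing the stated closed form.

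The main obstacle will be rigorously justifying the equalisation step: although intuitive, both $A(\boldsymbol\eta)$ and $C(\boldsymbol\eta)$ shift when $\boldsymbol\eta$ is perturbed, so a naive exchange argument is inconclusive. I would handle this either by the change of variable $x_k=\sqrt{\eta_k}$, writing $\mathrm{SINR}_k=(\tilde{\mathbf a}_k^{T}\mathbf x)^2/(\mathbf x^{T} M\mathbf x)$ as a generalised Rayleigh quotient (with $M=\delta^2\mathbf a\mathbf a^{T}+(u^2+\delta^2)\mathrm{diag}(1-a_i^2)$) and invoking the associated minimax theory, or via a KKT analysis of the epigraph reformulation $\max t$ subject to $\mathrm{SINR}_k\ge t$ for all $k$, $\sum_k\eta_k=1$, $\eta_k\ge 0$, verifying that stationarity holds precisely at the candidate $\boldsymbol\eta$ in \eqref{E20}. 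All remaining manipulations are mostly bookkeeping driven by the closed-form variances in Proposition~\ref{p2}.
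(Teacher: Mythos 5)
Your proposal follows essentially the same route as the paper's proof: let $N_{\text{RF}}\to\infty$ in Theorem~\ref{T1} so the noise term vanishes and one factor of $N_{\text{RF}}$ cancels, factor out $\alpha_k$ via $u_{p,k}^2=\alpha_k u^2$ so that the only $k$-dependent quantity is $\sqrt{\eta_k\,\delta_{p,k}^4/(\alpha_k(u_{p,k}^2+\delta_{p,k}^2))}$, equalize the SINRs to obtain \eqref{E20}, and substitute back using Proposition~\ref{p2}. The paper likewise simply asserts SINR equalization at the optimum without the KKT/Rayleigh-quotient justification you flag, so your plan is, if anything, more careful on that step; the rest coincides with the paper's argument.
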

\begin{proof}
Refer to Appendix \ref{A6}.
\end{proof}

Theorem  \ref{t3} implies that with a large number of RF chains, the effect of noise vanishes, and the multicast rate is determined by the channel conditions of all users.
Moreover, the maximum signal to interference plus noise ratio (SINR) is proportional to $N_{\text{RF}}$, indicating that increasing the number of RF chains can significantly improve the multicast rate.
Besides, increasing the pilot power can improve the multicast rate, due to more accurate channel estimation.

\begin{proposition}\label{p6}
The power control coefficients $\eta_k$ is a decreasing function with respect to $\alpha_k$, indicating that more power should be allocated to users with poor channel conditions.
\end{proposition}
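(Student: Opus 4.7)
The plan is to reduce the claim to two clean statements: (i) the numerator $f(\alpha_k)\triangleq \alpha_k(u_{p,k}^{2}+\delta_{p,k}^{2})/\delta_{p,k}^{4}$ of $\eta_k$ is a strictly decreasing positive function of $\alpha_k$, and (ii) for any fixed positive $C$, the map $x\mapsto x/(x+C)$ is strictly increasing, so substituting $C=\sum_{i\ne k}f(\alpha_i)$ turns the monotonicity of $f$ in $\alpha_k$ into monotonicity of $\eta_k$ in $\alpha_k$.

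For step (i), I would substitute the expressions from Proposition~\ref{p2}, namely $u_{p,k}^{2}=\alpha_k u^2$ and $\delta_{p,k}^{2}=\tau_p\rho_p\alpha_k^{2}\delta^4/(1+\tau_p\rho_p\alpha_k\delta^2)$, and split $f$ as $\alpha_k u_{p,k}^2/\delta_{p,k}^4+\alpha_k/\delta_{p,k}^2$. Setting $a\triangleq \tau_p\rho_p\delta^2$, a short manipulation cancels the powers of $\alpha_k$ cleanly and gives
\begin{equation*}
f(\alpha_k)=\frac{u^2}{\delta^4}\!\left(\frac{1}{a\alpha_k}+1\right)^{\!2}+\frac{1}{\delta^2}\!\left(\frac{1}{a\alpha_k}+1\right),
\end{equation*}
i.e.\ a sum of two strictly positive terms, each of which is a strictly increasing function of $1/(a\alpha_k)+1$. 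Since $1/(a\alpha_k)$ is strictly decreasing in $\alpha_k>0$, so is $f$.

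For step (ii), I would fix all other large-scale coefficients $\{\alpha_i\}_{i\ne k}$ and write $\eta_k=f(\alpha_k)/(f(\alpha_k)+C)$ with $C\triangleq\sum_{i\ne k}f(\alpha_i)>0$ independent of $\alpha_k$. Differentiating,
\begin{equation*}
\frac{\partial \eta_k}{\partial \alpha_k}=\frac{C\,f'(\alpha_k)}{\bigl(f(\alpha_k)+C\bigr)^{2}}<0,
\end{equation*}
since $f'<0$ by step (i) and $C>0$. This proves the proposition and yields the stated interpretation: weaker users (smaller $\alpha_k$) receive larger $\eta_k$.

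The only genuinely delicate step is the algebraic simplification that produces the factored form of $f(\alpha_k)$ above; because $\alpha_k$ appears both in $u_{p,k}^{2}$ and rationally inside $\delta_{p,k}^{2}$, naively differentiating the original expression gives a rational function whose sign is not transparent. Factoring $(1+a\alpha_k)$ out of $\delta_{p,k}^{4}$ against the numerator is what makes the monotonicity manifest, after which step (ii) is a one-line quotient-rule computation.
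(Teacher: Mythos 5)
Your proposal is correct and follows essentially the same route as the paper: both first show that the per-user unnormalized coefficient $\phi\eta_k=\alpha_k\bigl(u_{p,k}^{2}+\delta_{p,k}^{2}\bigr)/\delta_{p,k}^{4}$ decreases in $\alpha_k$, and then conclude via the monotonicity of $x\mapsto x/(x+C)$ with $C=\sum_{i\ne k}\phi\eta_i$ fixed. The only difference is cosmetic: you establish the first step by an explicit factorization in terms of $\tfrac{1}{a\alpha_k}+1$ with $a=\tau_p\rho_p\delta^2$, whereas the paper argues it by a chain rule through $\tilde{\delta}_{p,k}^2=\delta_{p,k}^2/\alpha_k$, which is increasing in $\alpha_k$ while $\bigl(u^2+\tilde{\delta}_{p,k}^2\bigr)/\tilde{\delta}_{p,k}^4$ is decreasing in $\tilde{\delta}_{p,k}$.
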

\begin{proof}
Utilizing the results given by Proposition \ref{p2},  the optimal power control coefficient can be rewritten as
$
\eta_k=\frac{u^2+\tilde{\delta}_{p,k}^2 }{\phi\tilde{\delta}_{p,k}^4 },
$
where $\tilde{\delta}_{p,k}^2 \triangleq \frac{\tau_p \rho_p \alpha_k \delta^4}{1+\tau_p \rho_p \alpha_k \delta^2}$.
Let $\tilde{\eta}_k \triangleq \phi \tilde{\eta}_k$. Due to $\frac{\partial \tilde{\eta}_k}{\partial \tilde{\delta}_{p,k} } <0$ and $\frac{ \partial \tilde{\delta}_{p,k} }{\partial \alpha_k} >0$,  we have
$
\frac{\partial \tilde{\eta}_k}{\partial \alpha_k} =\frac{\partial \tilde{\eta}_k}{\partial \tilde{\delta}_{p,k} }
\frac{ \partial \tilde{\delta}_{p,k} }{\partial \alpha_k} <0,
$
indicating that $\tilde{\eta}_k$ is  a decreasing function with respect to $\alpha_k$.
To this end, noticing that $\eta_k=1/({1+ \frac{1}{\tilde{\eta}_k}\sum\limits_{i \ne k}^{K}  \tilde{\eta}_i  })$ increases with $\tilde{\eta}_k$, we complete the proof.
\end{proof}

\subsection{A large number of reflecting elements}

\begin{theorem}\label{t4}
When $N_{\text{RF}}$ is fixed while $L \rightarrow \infty$, the optimal   power control coefficients
are
\begin{align} \label{E20}
\eta_k=\frac{1 }{K},k=1,...,K,
\end{align}
and the  multicast rate is given by
 \begin{align} \label{E14}
R =\log_2 \left\{1+\frac{ \pi N_{\text{RF}} L  {\left( \frac{M_{\text{ph}}}{\pi} \sin\frac{\pi}{M_{\text{ph}}}\right) }^2}{\left\{ 4-\frac{\pi}{K}{\left( \frac{M_{\text{ph}}}{\pi} \sin \frac{\pi}{M_{\text{ph}}}\right)}^2\right\} \left( K+1\right)}\right\}.
\end{align}
\end{theorem}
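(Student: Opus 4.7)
The plan is to take the closed-form rate expression (\ref{E10}) from Theorem \ref{T1} as the starting point and carefully track the asymptotic scaling of every quantity as $L\to\infty$ with $N_{\text{RF}}$, $K$, $M_{\text{ph}}$, $\rho$, $\rho_p$, $\tau_p$, $\alpha_k$ fixed. From Proposition \ref{p1}, $u=\Theta(L)$ and $\delta^2=\Theta(L)$, and consequently $u^2/\delta^2=\Theta(L)$. From Proposition \ref{p2}, $u_{p,k}^2=\alpha_k u^2=\Theta(L^2)$ while $\delta_{p,k}^2=\tau_p\rho_p\alpha_k^2\delta^4/(1+\tau_p\rho_p\alpha_k\delta^2)\to\alpha_k\delta^2=\Theta(L)$. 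Hence $u_{p,k}^2+\delta_{p,k}^2\sim\alpha_k u^2$, so that $\sqrt{\eta_i/(u_{p,i}^2+\delta_{p,i}^2)}\sim \sqrt{\eta_i}/(\sqrt{\alpha_i}\,u)$ and $\eta_i\delta_{p,i}^2/(u_{p,i}^2+\delta_{p,i}^2)\sim\eta_i\delta^2/u^2$.

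Next I would substitute these asymptotics into the numerator and denominator of (\ref{E10}). In the numerator, the term $\sqrt{\eta_k/(u_{p,k}^2+\delta_{p,k}^2)}\,\delta_{p,k}^2=\Theta(1)$ is negligible next to $\sum_i\sqrt{\eta_i/(u_{p,i}^2+\delta_{p,i}^2)}u_{p,k}u_{p,i}\sim\sqrt{\alpha_k}\,u\sum_i\sqrt{\eta_i}=\Theta(L)$, so the numerator collapses to $\rho N_{\text{RF}}^2\alpha_k u^2(\sum_i\sqrt{\eta_i})^2$. In the denominator, the $1$ is negligible, the first $\delta^2$-term tends to $\alpha_k\rho\delta^2 N_{\text{RF}}(\sum_i\sqrt{\eta_i})^2$, and the second tends to $\alpha_k\rho(u^2+\delta^2)N_{\text{RF}}(\delta^2/u^2)\sum_i\eta_i=\alpha_k\rho\delta^2 N_{\text{RF}}(1+\delta^2/u^2)$, whose $\delta^4/u^2=O(1)$ piece is subdominant. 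Pulling out $\alpha_k\rho\delta^2 N_{\text{RF}}$ and using $\sum_i\eta_i=1$, the $\alpha_k$ factor cancels between numerator and denominator, and the limiting SINR becomes
\begin{equation*}
\operatorname{SINR}_k\;\longrightarrow\;\frac{N_{\text{RF}}\,(u^2/\delta^2)\,\bigl(\sum_{i=1}^K\sqrt{\eta_i}\bigr)^2}{\bigl(\sum_{i=1}^K\sqrt{\eta_i}\bigr)^2+1},
\end{equation*}
which, crucially, is identical for every $k$. Therefore the min-rate problem (\ref{E24}) reduces to maximizing $f(\bm\eta)=(\sum_i\sqrt{\eta_i})^2$ subject to $\sum_i\eta_i=1$, $\eta_i\ge 0$.

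The optimization step is then immediate: by Cauchy--Schwarz, $\sum_i\sqrt{\eta_i}\le\sqrt{K}\sqrt{\sum_i\eta_i}=\sqrt{K}$ with equality iff $\eta_i=1/K$, giving the claimed equal-power allocation and $(\sum_i\sqrt{\eta_i})^2=K$. Finally, I would plug $a\triangleq(M_{\text{ph}}/\pi)\sin(\pi/M_{\text{ph}})$ into Proposition \ref{p1} to obtain $u^2/\delta^2=L\pi a^2/(4K-\pi a^2)$, substitute into the SINR, and simplify $\frac{N_{\text{RF}} L\pi a^2 K/(4K-\pi a^2)}{K+1}=\frac{\pi N_{\text{RF}} L a^2}{(4-\pi a^2/K)(K+1)}$, which recovers (\ref{E14}).

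The main obstacle is bookkeeping rather than any deep argument: one must justify, term by term, that every subleading contribution in (\ref{E10}) is negligible relative to the $\Theta(L^2)$ numerator and $\Theta(L)$ denominator pieces that survive, and in particular verify that the cancellation of $\alpha_k$ is not spoiled by the $O(1)$ corrections from $\delta_{p,k}^2$, so that the per-user SINR truly becomes $k$-independent in the limit. Once this uniformity is established, the reduction to the Cauchy--Schwarz-style optimization and the closed-form substitution are routine.
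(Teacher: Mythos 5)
Your proposal is correct and follows essentially the same route as the paper's Appendix G: starting from the closed-form rate of Theorem \ref{T1}, using $\delta_{p,i}^2\to\alpha_i\delta^2$ and $u_{p,i}=\sqrt{\alpha_i}\,u$ for large $L$, discarding the subdominant terms so that the SINR reduces to the $k$-independent quantity $\frac{N_{\text{RF}}(u^2/\delta^2)\left(\sum_i\sqrt{\eta_i}\right)^2}{\left(\sum_i\sqrt{\eta_i}\right)^2+1}$, and then maximizing $\sum_i\sqrt{\eta_i}$ under $\sum_i\eta_i=1$ to get $\eta_k=1/K$. The only cosmetic difference is that you invoke Cauchy--Schwarz for the last step while the paper reuses the KKT argument from Theorem \ref{t2}; both yield the same equal-power solution and the same closed-form limit (\ref{E14}).
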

\begin{proof}
Refer to Appendix \ref{A7}.
\end{proof}

Theorem \ref{t4} shows that with a large $L$, the effect of noise as well as the equivalent channel estimation error vanishes. The reason is that a large number of reflecting elements can significantly enhance the equivalent channel. Also, as the number of reflecting elements increases, the amplitude reflection coefficient  becomes irrelevant, indicating that increasing the number of reflecting elements can compensate for the power loss caused by PMS reflection. In addition, the SINR is proportional to $N=N_{\text{RF}} L$, which implies that the multicast rate can be greatly improved by increasing the number of reflecting elements.


\subsubsection{The impact of phase shift number}
\begin{proposition} \label{p3}
With large $L$, the  multicast rate is an increasing function with respect to the phase shift number $M_{\text{ph}}$. Furthermore, when the phase shift number is sufficiently large, the multicast rate is given by
\begin{align}
R =\log_2 \left\{1+\frac{\pi N_{\text{RF}} L   K }{\left( 4K-\pi\right) \left\{ K+1 \right\}} \right\}.
\end{align}

\begin{proof}
Starting from $R$ given in Theorem \ref{t4}, we can see that $ \frac{M_{\text{ph}}}{\pi} \sin\left(\frac{\pi}{M_{\text{ph}}} \right)$ and $R$ are increasing functions with respect to $M_{\text{ph}}$ and $ \frac{M_{\text{ph}}}{\pi} \sin\left(\frac{\pi}{M_{\text{ph}}} \right)$, respectively.  Thus, $R$ increases with $M_{\text{ph}}$.
Noticing that $\underset{M_{\text{ph}} \to \infty  }{\lim}  \frac{M_{\text{ph}}}{\pi} \sin\frac{\pi}{M_{\text{ph}}}=1$, we can obtain the desired result.
\end{proof}

Proposition \ref{p3} is rather intuitive since highly accurate beam can be obtained with high-resolution phase shifts. Moreover, as the phase shift number becomes sufficiently large, the multicast rate becomes independent of  $M_{\text{ph}}$ and gradually converges to a limit, indicating that the gain of using high-resolution phase shift diminishes gradually.

\end{proposition}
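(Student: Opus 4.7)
The plan is to isolate the dependence of the rate on $M_{\mathrm{ph}}$ in the expression from Theorem \ref{t4} through the single quantity $f(M_{\mathrm{ph}}) \triangleq \frac{M_{\mathrm{ph}}}{\pi}\sin\left(\frac{\pi}{M_{\mathrm{ph}}}\right)$, and then chain two monotonicity results: (i) $f$ is increasing in $M_{\mathrm{ph}}$, and (ii) the SINR is increasing in $f^2 \in (0,1]$. Writing the rate as $R = \log_2\left(1+\frac{\pi N_{\mathrm{RF}} L\, f^2}{(4-\frac{\pi}{K} f^2)(K+1)}\right)$ makes this decomposition explicit and reduces the problem to analyzing a one-variable function.

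For step (ii), the inequality $\sin x \le x$ for $x \ge 0$ yields $0 < f(M_{\mathrm{ph}}) \le 1$, so $\frac{\pi}{K} f^2 \le \frac{\pi}{K} < 4$ and the denominator is strictly positive. The map $u \mapsto \frac{u}{4-(\pi/K)u}$ on $u \in (0,1]$ has derivative $\frac{4}{(4-(\pi/K)u)^2} > 0$ and is therefore strictly increasing. Composing with the obvious monotonicity of $u = f^2$ in $f \in (0,1]$ shows that the SINR, and hence $R$, is strictly increasing in $f$.

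Step (i) is the main (and only nontrivial) obstacle. I would substitute $t = \pi/M_{\mathrm{ph}}$, reducing the claim to showing that the sinc function $\phi(t) \triangleq \frac{\sin t}{t}$ is strictly decreasing on $(0,\pi]$, since $M_{\mathrm{ph}}\uparrow\infty$ corresponds to $t\downarrow 0$. This is a standard fact, but for completeness I would verify it by differentiation: $\phi'(t) = \frac{t\cos t - \sin t}{t^2}$, and the numerator $g(t) \triangleq t\cos t - \sin t$ satisfies $g(0)=0$ together with $g'(t) = -t\sin t \le 0$ on $[0,\pi]$, so $g(t)\le 0$ and $\phi$ is decreasing. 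Combining (i) and (ii) proves that $R$ is increasing in $M_{\mathrm{ph}}$.

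Finally, the asymptotic value follows from $\lim_{t\to 0^+}\frac{\sin t}{t}=1$, i.e., $\lim_{M_{\mathrm{ph}}\to\infty} f(M_{\mathrm{ph}})=1$. Substituting $f^2=1$ into the SINR expression gives $\frac{\pi N_{\mathrm{RF}} L}{(4-\pi/K)(K+1)} = \frac{\pi N_{\mathrm{RF}} L K}{(4K-\pi)(K+1)}$, which plugged into $\log_2(1+\cdot)$ yields the stated limiting formula.
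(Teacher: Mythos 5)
Your proposal is correct and follows essentially the same route as the paper's proof: express the rate of Theorem \ref{t4} as a function of $f(M_{\mathrm{ph}})=\frac{M_{\mathrm{ph}}}{\pi}\sin\left(\frac{\pi}{M_{\mathrm{ph}}}\right)$, argue that the SINR is increasing in $f$ and $f$ is increasing in $M_{\mathrm{ph}}$, then let $f\to 1$. The only difference is that you explicitly verify the two monotonicity claims (positivity of the denominator, the derivative computation for the sinc function) that the paper simply asserts, which is a welcome but not substantively different elaboration.
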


\subsubsection{The impact of user number}
\begin{proposition} \label{p4}
With a large number of reflecting elements, the multicast rate is a decreasing function with respect to the user number. Furthermore, with a large number of users, the multicast rate is given by
\begin{align}
R=\log_2 \left\{1+\frac{ \pi N_{\text{RF}} L  {\left( \frac{M_{\text{ph}}}{\pi} \sin\frac{\pi}{M_{\text{ph}}}\right) }^2}{4K}\right\}.
\end{align}
\end{proposition}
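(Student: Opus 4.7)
The plan is to treat the closed-form multicast rate from Theorem \ref{t4} as a function of the user number $K$, with all other parameters fixed, and then (i) establish its monotonicity and (ii) extract its leading-order behaviour as $K\to\infty$. Writing $A\triangleq\bigl(\frac{M_{\text{ph}}}{\pi}\sin\frac{\pi}{M_{\text{ph}}}\bigr)^{2}$ for compactness, Theorem \ref{t4} gives
\begin{equation*}
R(K)=\log_2\!\left(1+\frac{\pi N_{\text{RF}} L\, A}{f(K)}\right),\qquad f(K)\triangleq\left(4-\frac{\pi A}{K}\right)(K+1).
\end{equation*}
Since $\log_2$ is monotone and the numerator of the SINR is independent of $K$, the monotonicity of $R$ is controlled entirely by $f(K)$.

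First I would expand $f(K)=4K+4-\pi A-\pi A/K$ and show it is strictly increasing in $K$. The cleanest route, given that $K$ is an integer, is a direct discrete computation: $f(K+1)-f(K)=4+\pi A/[K(K+1)]>0$, which is immediate because $A>0$. (A continuous relaxation with $f'(K)=4+\pi A/K^{2}>0$ gives the same conclusion and may be quoted as motivation.) This shows the denominator of the SINR grows in $K$, and therefore $R(K)$ is strictly decreasing, which proves the first half of the proposition.

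Next I would take the limit $K\to\infty$ in the same expression. The correction terms $-\pi A$ and $-\pi A/K$ in $f(K)$ are of lower order than the dominant contribution $4K$, and $K+1\sim K$, so $f(K)\sim 4K$. Substituting back $A$ gives
\begin{equation*}
\lim_{K\to\infty}\;\left[R(K)-\log_2\!\left(1+\frac{\pi N_{\text{RF}} L\,\bigl(\tfrac{M_{\text{ph}}}{\pi}\sin\tfrac{\pi}{M_{\text{ph}}}\bigr)^{2}}{4K}\right)\right]=0,
\end{equation*}
which is the advertised large-$K$ expression.

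No real obstacle is anticipated here; the proposition is a direct asymptotic reading of Theorem \ref{t4}, and the only point that requires a little care is insisting on the discrete difference $f(K+1)-f(K)$ rather than a derivative argument, since $K$ is by definition a positive integer. Everything else reduces to routine algebra on the denominator $f(K)$.
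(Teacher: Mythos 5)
Your proposal is correct and follows the same route as the paper, which simply states that the result follows directly from Theorem \ref{t4}; you merely fill in the routine algebra (the discrete monotonicity of the denominator $f(K)=\bigl(4-\pi A/K\bigr)(K+1)$ and its $4K$ leading behaviour) that the paper leaves implicit. No gaps.
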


\begin{proof}
Starting from Theorem \ref{t4}, we can easily obtain the desired result.
\end{proof}

From Proposition \ref{p4}, it can be seen that with massive users, the  SINR is inversely proportional to the user number, which implies that increasing the number of users can severely degrade the multicast rate. To compensate this rate loss, it is desired to employ a large number of reflecting elements.

\subsubsection{Increasing $N_{\text{RF}}$ V.S. Increasing $L$ }

 Although a higher rate can be achieved by increasing either $N_{\text{RF}}$ or  $L$ , it is better to increase the number of reflecting elements rather than the number of RF chains, because the power consumption and hardware cost of PMS are much lower than that of RF chains. In addition, with massive reflecting elements, the negative effects of noise, estimation error as well as the amplitude reflection coefficient can be effectively compensated, while with a large number of RF chains, only the effects of noise can be mitigated.


\section{numerical  results} \label{s5}
In this section, we provide numerical results to illustrate the performance of the PMS-based multicast system, as well as to verify the performance of the proposed channel estimation scheme. The considered system is assumed to operate at the frequency of $f_c=4.25$ GHz with the bandwidth of 180 kHz,\footnote{ In practice, the metasurface can only handle a limited bandwidth, because the same phase shifts must be applied in the entire band. How to design the metasurface operating in a wider frequency band remains to be studied. } and the coherence time is $\sqrt{\frac{9}{16 \pi f_m^2}}$ with the maximum Doppler shift given by $f_m=1$ Hz.
The noise spectral power density is $-169$ dBm/Hz.
The channel from the transmitter to the user is modeled as Rayleigh fading.
The large-scale fading coefficient is given by $\alpha=L_{\text{d}}^{-\lambda}$, where $\lambda=3$ is the path loss exponent, and $L_{\text{d}}$ is the transmission distance.
The gain of each horn antenna is $20$ dBi.
The PMS deployed $1$m away from the BS consists of $N_\text{RF}$ sub-metasurface, each of which consists of $L$ reflecting elements with the size of  $12 \times 12 $ ${\text{mm}}^2$.
The impact of the PMS is reflected in the phase shift beam ${\bf c}$. Unless specified, the optimal phase shift beam given in  Algorithm \ref{A1} is adopted.
In addition, we assume $K$ users are uniformly distributed in a disk with the radius $R=200$m. For each analytical result, 1000 random realizations of large-scale fading profiles are generated. For numerical results, they are obtained by averaging over 1000 independent small-scale fading parameters for each realization of large-scale channels.

Fig. \ref{f7} illustrates the performance of the proposed beam training scheme, where the normalized equivalent channel strength (NECS) (normalized by the ideal equivalent channel strength ) is defined as $
\frac{E\left\{   {\left\| { \bf{C H}} \right\|}^2  \right\}}{E\left\{   {\left\|    {\bf{C}}_{\text{opt}}  \bf{ H}  \right\|}^2  \right\} }$, with ${\bf{C}}_{\text{opt}} $ being the ideal phase shift matrix given by Proposition \ref{p5}. For comparison, the performance of the exhaustive scheme  and the random
selection scheme is also presented. As expected, the proposed beam training scheme significantly outperforms the random selection scheme over the entire SNR regime. Moreover, the performance of the proposed beam training scheme is close to that of the exhaustive scheme, regardless of the available number of phase shifts.


\begin{figure}[!ht]
  \centering
  \includegraphics[width=3in]{}
   \caption{  The  performance of the proposed beam training scheme with $L=4, N_{\text{RF}}=1, K=4, \beta=0.01 $.}
  \label{f7}
\end{figure}

 Fig. \ref{f1}  shows the multicast rate with different number of RF chains, where the analytical results are
generated according to Theorem \ref{T1}. As can be readily observed, the numerical results match exactly with the analytical results, thereby validating the correctness of the analytical expressions. Moreover, the multicast rate saturates in the high SNR regime due to imperfect channel estimation. In addition, we can see that the multicast rate improves as the number of RF chains increases. The reason is that a large number of  RF chains leads to higher diversity gains.
 \begin{figure}[!ht]
  \centering
  \includegraphics[width=3in]{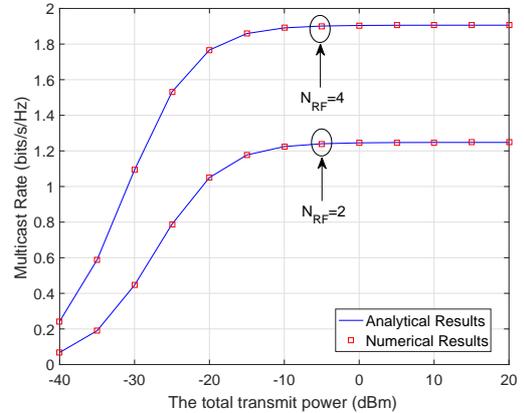}
   \caption{Multicast rate of PMS-based multicast systems with $L=8, K=8, M=2, \beta=0.01,\rho_p=-20 \text{dBm}$.}  
  \label{f1}
\end{figure}

 Fig. \ref{f2}  presents the multicast rate with different numbers of reflecting elements (per sub-PMS) and reflection coefficients, where the
``Approximate Results'' curve is generated according to
 Theorem \ref{t2}. As expected, the approximations well match the numerical results, especially with a larger $L$. Moreover, we can see that increasing $L$ can
significantly improve the multicast rate performance, because of the enhanced equivalent channel. Also, the multicast rate improves as the reflection coefficient $\beta$ becomes larger due to a less power loss caused by PMS reflection.
\begin{figure}[!ht]
  \centering
  \includegraphics[width=3in]{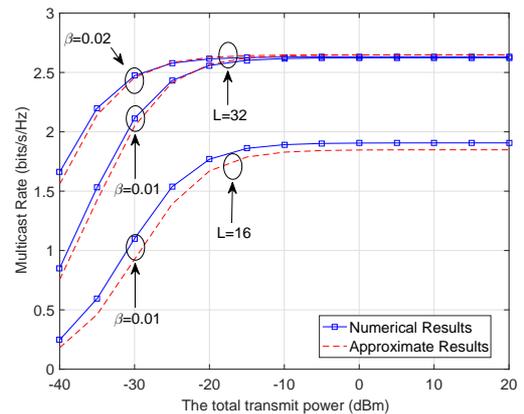}
   \caption{   Multicast rate of PMS-based multicast systems with $N_{\text{RF}}=4, K=8, M=2, \rho_p=-20 \text{dBm}$.}
  \label{f2}
\end{figure}

Fig. \ref{f3} shows the impact of the number of RF chains on the multicast rate with different pilot power, where the curve associated with ``Approximate Results'' is plotted according
to Theorem \ref{t3}. As the number of RF chains becomes larger, the gap between
the ``Approximate Results'' curve and the ``Numerical Results''
curve becomes smaller, which verifies our analytical results
in Theorem \ref{t3}. Moreover, we can see that as the number of RF chains becomes larger, the multicast rate keeps increasing without a ceiling, indicating that a large number of RF chains would significantly improve the multicast rate. Also, the multicast rate increases with the pilot power, due to more accurate channel estimation.

\begin{figure}[!ht]
  \centering
  \includegraphics[width=3in]{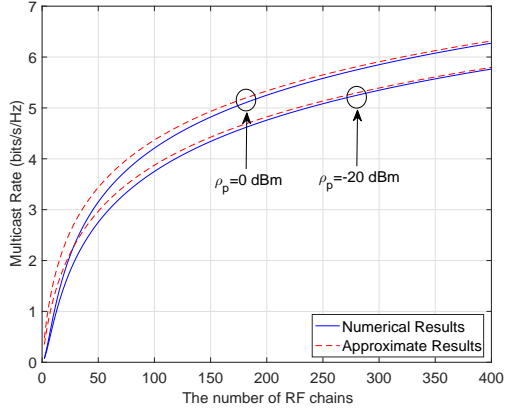}
   \caption{The impact of the number of RF chains with $L=2, K=8, M=2,\beta=0.01, \rho=-10 \text{dBm}$.}
  \label{f3}
\end{figure}

Fig. \ref{f4} illustrates the impact of the number of reflecting elements (per sub-PMS) on
the multicast rate, where we generate the ``Approximate Results'' curve according to Theorem \ref{t4}. As can be readily observed, the ``Approximate Results'' curve matches the ``Numerical Results'' curve well, thereby validating the correctness of Theorem \ref{t4}. Moreover, we can see that as $L$ becomes larger, the multicast rate keeps growing without a ceiling, which implies that increasing the number of reflecting elements (per sub-PMS) can always improve the multicast rate. Also, it can be observed that with the increase of   phase shift number, the multicast rate becomes larger, due to  more accurate beam training.
\begin{figure}[!ht]
  \centering
  \includegraphics[width=3in]{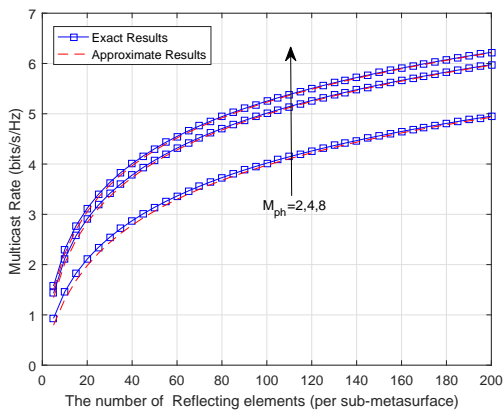}
   \caption{  The impact of the number of reflecting elements (per sub-PMS) with $N_{\text{RF}}=4, K=8,M=2, \beta=0.01, \rho_p=-20\text{dBm}, \rho=-10 \text{dBm}$.}
  \label{f4}
\end{figure}

Fig. \ref{f5} illustrates the impact of phase shift number on the multicast rate, where the ``Limit'' curve is plotted according to Proposition \ref{p3}. As the phase shift number becomes larger, the multicast rate gradually approaches the limit given by Proposition \ref{p3}, which verifies our analytical results. Moreover, a higher multicast rate limit can be achieved by  increasing the number of reflecting elements. This is because with massive phase shifts, the multicast rate is mainly dominated by the number of reflecting elements.
Besides, it can be seen that the  multicast rate  achieved by only few phase shifts  is comparable  to that with massive phase shifts. For instance, when $L=100$,  the multicast rate with 4 phase shifts  is about $94\%$  of that with 20 phase shifts.
\begin{figure}[!ht]
  \centering
  \includegraphics[width=3in]{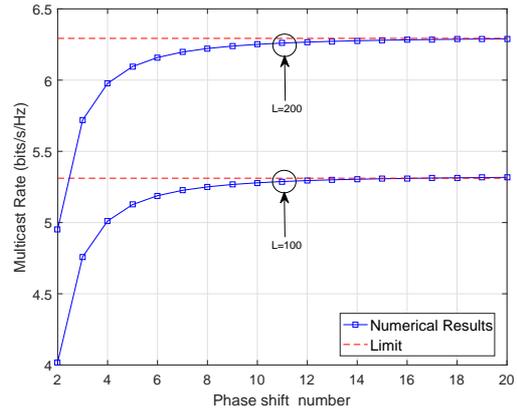}
   \caption{   The impact of phase shift number with $N_{\text{RF}}=4, K=8, \beta=0.01, \rho_p=-20\text{dBm}, \rho=-10 \text{dBm} $.}
  \label{f5}
\end{figure}

Fig. \ref{f6} depicts the impact of user number on the multicast rate, where the curve associated with ``Approximate Results'' is generated by Proposition \ref{p4}. As can be readily observed, the approximation is very tight, thereby verifying our analytical expressions. Moreover, the multicast rate is a decreasing function with respect to the number of users, which indicates that increasing the number of users would always degrade the multicast rate. The reason is that a large number of users would lead to poorly directional beams. In addition, we can see that increasing the number of reflecting elements can compensate the rate loss caused by the increase of  user number. For example, when the user number grows from 20 to 40, the muticast rate with $L=100$ drops from 3 bits/s/Hz to 2 bits/s/Hz. However, by increasing $L$ to $200$, the muticast rate can  remain unchanged at 3 bits/s/Hz.

\begin{figure}[!ht]
  \centering
  \includegraphics[width=3in]{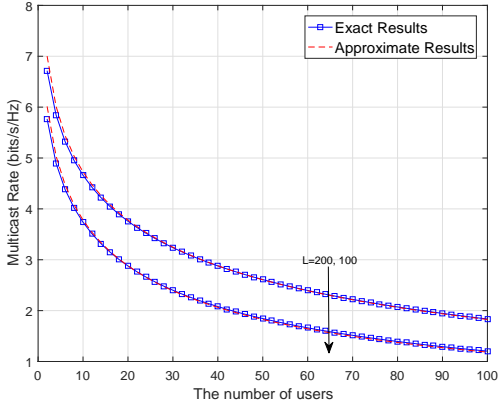}
   \caption{ The impact of user number with $N_{\text{RF}}=4, K=8, \rho_p=-20\text{dBm}, \rho=-10 \text{dBm}$.}
  \label{f6}
\end{figure}

Fig. \ref{transmitter} compares the proposed PMS transmitter with  a traditional multi-antenna transmitter. We can see that 
in the low-SNR regime, our proposed PMS transmitter is worse than the traditional multi-antenna transmitter due to the power loss caused by PMS reflection as well as signal propagation from the BS to the PMS. As the SNR increases, our proposed PMS transmitter becomes superior to the traditional multi-antenna transmitter. Moreover, the rate gap becomes larger as the number of reflecting elements increases, due to both the increased beam gain  and  aperture gain of the PMS.
\begin{figure}[!ht]
  \centering
  \includegraphics[width=3in]{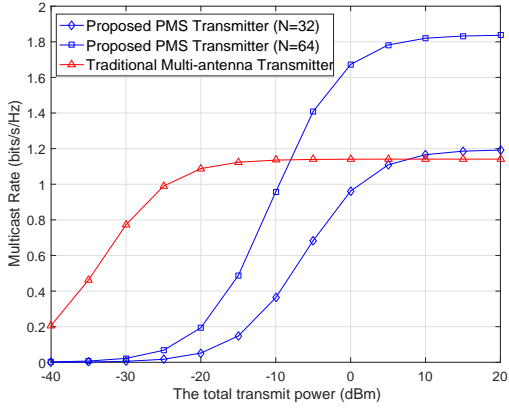}
   \caption{  Proposed PMS transmitter V.S. Traditional multi-antenna transmitter with $N_\text{RF}=8$, $K=8$, $M=2$, $\beta=0.01$, $\rho_p=-20 \text{dBm}$.}  
  \label{transmitter}
\end{figure}

\section{Conclusion} \label{s6}

This paper has investigated the performance of the PMS-based multicast system, taking into account of the limited resolution of phase shifts. A novel beam training algorithm has been proposed, which achieves comparable performance as the exhaustive search scheme and has much lower time overhead. Then, an exact closed-form expression for the individual user rate has been derived. Moreover, several concise asymptotical approximations for the multicast rate are presented. The analytical findings suggests that deploying a large number of RF chains or reflecting elements can greatly improve the multicast rate. Besides, as the phase shift number increases, the multicast rate gradually saturates, and the multicast rate is a decreasing function with respect to the number of users. Furthermore, with a large number of RF chains, it is better to allocate more power to users with poor channel conditions. But with large pilot power or massive reflecting elements, equal power allocation is desirable.

\begin{appendices}
\section{Proof of Proposition \ref{p1}}\label{A2}
Without loss of generality,  we focus on the $n$-th element of $\bar{{\bf h}}_k$:
$
\bar{{  h}}_{k,n}={\bf c}_{n}^{T} {\bf h}_{k,n}=\sum\limits_{l=1}^{L} c_{n}^l h_{k,n}^l.
$
Since we assume a large number of reflecting elements and limited number of RF chains, $L=\frac{N}{N_{\text{RF}}}$ is large. According to the central-limit theorem, $\bar{{  h}}_{k,n}$ approximately follows normal distribution $\mathcal{CN} \left( L u_0,L \delta_0^2\right)$, where $u_0$ and $\delta_0^2$ are the mean and  variance of $c_{n}^l h_{k,n}^l$, respectively.
In the following, we try to derive $u_0$ and $\delta_0^2$.

(1) Compute $u_0$

Denote the phase error resulted from the finite phase  shift number $M_{\text{ph}}$ by $\Delta \theta$, and we have
\begin{align}\label{E6}
u_0=\mathbb{E} \left\{ c_{n}^l h_{k,n}^l \right\} \overset{(a)}{=} \mathbb{E} \left\{ e^{j \Delta \theta} \right\}  \mathbb{E} \left\{ c_{\text{opt},n}^l h_{k,n}^l \right\},
\end{align}
where (a) is obtained according to $c_{n}^l=e^{j \Delta \theta} c_{\text{opt},n}^l$.

 We start with the computation of $\mathbb{E} \left\{ e^{j \Delta \theta} \right\}$:
 \begin{align}\label{E7}
 \mathbb{E} \{ e^{j \Delta \theta} \} \overset{(a)}{=} \int_{-\frac{\pi}{M_{\text{ph}}}}^{\frac{\pi}{M_{\text{ph}}}} e^{j \Delta \theta} \frac{M_{\text{ph}}}{2 \pi} d\Delta \theta= \frac{M_{\text{ph}}}{\pi} \sin(\frac{\pi}{M_{\text{ph}}} ),
 \end{align}
 where (a) follows the fact that $\Delta \theta \in (-\frac{\pi}{M_{\text{ph}}},\frac{\pi}{M_{\text{ph}}})$ is uniformly distributed.

 Then using the result given by Proposition \ref{p5},  we can express $\mathbb{E} \{ c_{\text{opt},n}^l h_{k,n}^l \}$ as
$
\mathbb{E} \{ c_{\text{opt},n}^l h_{k,n}^l \}= \beta  \mathbb{E} \{  \frac{h_{\text{sum},n}^{l*}}{| {  h}_{\text{sum},n}^l |} h_{k,n}^l \}.
$
Recall that ${h}_{\text{sum},n}^l=\sum\limits_{k=1}^{K} { h}_{k,n}^l$, and the following equation holds:
$
\sum\limits_{k=1}^{K} \mathbb{E}  \{ c_{\text{opt},n}^l h_{k,n}^l \}
=\beta \mathbb{E}\{  | {h}_{\text{sum},n}^l |  \}.
$
By noticing that $\left| {h}_{\text{sum},n}^l \right| $ follows Rayleigh distribution and the variance of ${h}_{\text{sum},n}^l$ is $K$, we have
$
\sum\limits_{k=1}^{K} \mathbb{E}  \{ c_{\text{opt},n}^l h_{k,n}^l \}= \frac{\beta \sqrt{\pi K}}{2}.
$
Since $\mathbb{E} \{ c_{\text{opt},n}^l h_{k_1,n}^l  \}=\mathbb{E}  \{ c_{\text{opt},n}^l h_{k_2,n}^l \}$ holds for any $(k_1,k_2)$, we have
\begin{align} \label{E8}
\mathbb{E} \left\{ c_{\text{opt},n}^l h_{k,n}^l \right\}=\frac{\beta}{2} \sqrt{\frac{\pi}{K}}.
\end{align}

Substituting (\ref{E7}) and (\ref{E8}) into (\ref{E6}), we obtain
\begin{align}
u_0=\frac{\beta}{2} \sqrt{\frac{\pi}{K}} \frac{M_{\text{ph}}}{\pi} \sin\left(\frac{\pi}{M_{\text{ph}}}\right).
\end{align}

(2) Compute $\delta_0^2$

Recall $\left|c_{\text{opt},n}^l \right|=\beta$, and we have
\begin{align}
\mathbb{E} \left\{{ \left|c_{\text{opt},n}^l h_{k,n}^l\right|}^2 \right\}
=\beta^2 \mathbb{E} \left\{{ \left|  h_{k,n}^l\right|}^2 \right\}=\beta^2 ,
\end{align}

based on which, we obtain
\begin{align}
\delta_0^2&= \mathbb{E} \left\{{ \left|c_{\text{opt},n}^l h_{k,n}^l\right|}^2 \right\}\!-\!u_0^2\\
&
=\beta^2 \left\{1\!-\!\frac{\pi}{4 K} { \left\{\frac{M_{\text{ph}}}{\pi} \sin\left(\frac{\pi}{M_{\text{ph}}}\right)\right\} }^2  \right\}.\nonumber
\end{align}

To this end, by noticing that $u=L u_0$ and $\delta^2=L \delta_0^2$, we complete our proof.

\section{Proof of Proposition \ref{p2} } \label{A3}
According to the property of MMSE, we have
\begin{align} \label{E41}
{\hat{\bar{\bf g}}}_{k}&=\mathbb{E} \left\{ \bar{\bf g}_{p,k} \right\} \\
&+\! \text{cov}\left({\bf y}_{p,k},{\bar{\bf g}}_k \right) {\left\{\text{cov}\left({\bf y}_{p,k},{\bf y}_{p,k}\right)\right\}}^{-1} \left\{ {\bf y}_{p,k}\! -\! E \left\{{\bf y}_{p,k} \right\} \right\},\nonumber
\end{align}
where 
\begin{align}
\mathbb{E} \left\{ \bar{\bf g}_{p,k} \right\}=  \sqrt{\alpha_k}u {\bf 1}_{N_{\text{RF}} \times 1},\  \mathbb{E} \left\{  {\bf y}_{p,k} \right\}=  \sqrt{\tau_p \rho_p \alpha_k} u {\bf 1}_{N_{\text{RF}} \times 1}, \nonumber
\end{align}
and ${\bf 1}_{N_{\text{RF}} \times 1}$ denotes an $N_{\text{RF}} \times 1$ vector whose elements are 1.

We first compute the covariance matrix of ${\bf y}_{p,k}$ and ${\bar{\bf g}}_k$, $ \text{cov}\left({\bf y}_{p,k},{\bar{\bf g}}_k \right)$. Using (\ref{E5}) and (\ref{E6}), we have
\begin{align} \label{E42}
 \text{cov}\left({\bf y}_{p,k},{\bar{\bf g}}_k \right)&=
 \mathbb{E}\left\{\left\{ \bar{\bf g}_{p,k}-\mathbb{E}\left\{ \bar{\bf g}_{p,k} \right\} \right\} {\left\{ {\bf y}_{p,k}-\mathbb{E}\left\{ {\bf y}_{p,k} \right\} \right\}}^H\right\} \nonumber \\
 & = \mathbb{E}\left\{    \sqrt{\alpha_k} \delta \bar{\bf g}_k
 { \left(\sqrt{\tau_p \rho_p \alpha_k}\delta {\bar{\bf g}}_k +{\bf n}_{p,k}\right)}^H
   \right\} \nonumber \\
   &= \alpha_k \sqrt{\tau_p \rho_p} \delta^2   {\bf I}_{N_{\text{RF}}}.
\end{align}

Then, we calculate $\text{cov}\left({\bf y}_{p,k},{\bf y}_{p,k}\right)$ :
\begin{align}\label{E43}
\text{cov}\left({\bf y}_{p,k},{\bf y}_{p,k}\right)
 &=\mathbb{E}\left\{\left( {\bf y}_{p,k}- \mathbb{E} \left\{  {\bf y}_{p,k} \right\} \right) {  \left( {\bf y}_{p,k}- \mathbb{E} \left\{  {\bf y}_{p,k} \right\}   \right)}^H\right\} \nonumber\\
 &=\left(1+  \tau_p \rho_p \alpha_k \delta^2 \right) {\bf I}_{N_{\text{RF}}}.
\end{align}

Substituting (\ref{E42}) and (\ref{E43}) into (\ref{E41}), we obtain
$
\hat{\bar{ {\bf g}}}_{k}= \sqrt{\alpha_k} u{ \bf  1}_{N_{\text{RF}}}
+\frac{ \alpha_k \delta^2 \sqrt{\tau_p \rho_p} }{ \tau_p \rho_p \alpha_k \delta^2+1}
\left\{ {\bf y}_{p,k}- \sqrt{  \alpha_k \tau_p \rho_p   }u { \bf  1}_{N_{\text{RF}}} \right\},
$
based on which, the covariance matrix  of $\hat{\bar{ {\bf g}}}_{k}$ is given by
\begin{align}
\text{cov}\left(\hat{\bar{{\bf g}}}_{p,k},\hat{\bar{{\bf g}}}_{p,k}\right)
 &\!=\!\mathbb{E}\left\{\!\left( \hat{\bar{{\bf g}}}_{p,k}\!-\!\mathbb{E} \left\{  \hat{\bar{{\bf g}}}_{p,k} \right\} \right) {  \left( \hat{\bar{{\bf g}}}_{p,k}\!-\!\mathbb{E} \left\{  \hat{\bar{{\bf g}}}_{p,k} \!\right\}   \right)}^H\right\} \nonumber \\
 &=\frac{\tau_p \rho_p \alpha_k^2 \delta^4}{1+ \tau_p \rho_p \alpha_k \delta^2}{\bf I}_{N_{\text{RF}}}.
\end{align}

By noticing that $\text{cov}\left({\bar{{\bf g}}}_{p,k},{\bar{{\bf g}}}_{p,k}\right)=\text{cov}\left(\hat{\bar{{\bf g}}}_{p,k},\hat{\bar{{\bf g}}}_{p,k}\right)+\text{cov}\left( { \bf e }_{k}, {\bf e }_{k}\right)$, we have
\begin{align}
\text{cov}\left({\bf e}_{k},{\bf e}_{k}\right)
=  \frac{\alpha_k \delta^2  }{ 1 +\tau_p \rho_p \alpha_k  \delta^2}{\bf I}_{N_{\text{RF}}}.
\end{align}

\section{Proof of Theorem \ref{T1}} \label{A4}
In the following, we will calculate $A_k$ and $B_k$ respectively.

1) Calculate $A_k$
\begin{align}
{A}_{k}
&= \sqrt{\rho} \mathbb{E} \left\{  \left( {\hat{\bar{\bf g}}}_k^T+{\bf e}_k^T\right) \sum\limits_{i=1}^{K} \sqrt{\frac{\eta_i}{u_{p,i}^{2}+\delta_{p,i}^{2}}}  {\hat{\bar{\bf g}}}_i^*  \right\} \\
&\overset{(a)}{=} \sqrt{\rho} \mathbb{E} \left\{    {\hat{\bar{\bf g}}}_k^T  \sum\limits_{i=1}^{K} \sqrt{\frac{\eta_i}{u_{p,i}^{2}+\delta_{p,i}^{2}}}  {\hat{\bar{\bf g}}}_i^*  \right\}\nonumber \\
&=\!\sqrt{\rho} N_{\text{RF}} \left\{\!\sum\limits_{i\!=\!1}^{K} \sqrt{\frac{\eta_i}{u_{p,i}^{2}\!+\!\delta_{p,i}^{2}}} u_{p,k} u_{p,i}\!+\!\sqrt{\frac{\eta_k}{u_{p,k}^{2}\!+\!\delta_{p,k}^{2}}} \delta_{p,k}^{2} \!\right\}. \nonumber
\end{align}
where (a) follows the fact $\mathbb{E}\left\{{\bf e}_k^T \right\}={\bf 0}$.

2) Calculate $B_k$

We first compute
\begin{align}
\mathbb{E}\left\{\!{\left|\! \left(\! {\hat{\bar{\bf g}}}_k^T\!+\!{\bf e}_k^T \!\right)\! \sum\limits_{i=1}^{K}  \sqrt{\frac{\eta_i}{u_{p,i}^{2}\!+\!\delta_{p,i}^{2}}}  {\hat{\bar{\bf g}}}_i^* \right|}^2 \right\}
 =B_{k}^{(1)}\!+\!B_{k}^{(2)},
\end{align}
where
\begin{align}
&B_{k}^{(1)}=\mathbb{E}\left\{{\left|  {\hat{\bar{\bf g}}}_k^T \sum\limits_{i=1}^{K}  \sqrt{\frac{\eta_i}{u_{p,i}^{2}+\delta_{p,i}^{2}}}  {\hat{\bar{\bf g}}}_i^* \right|}^2 \right\},\\
&B_{k}^{(2)}=\mathbb{E}\left\{{\left|  {\bf e}_k^T \sum\limits_{i=1}^{K}  \sqrt{\frac{\eta_i}{u_{p,i}^{2}+\delta_{p,i}^{2}}}  {\hat{\bar{\bf g}}}_i^* \right|}^2 \right\}.
\end{align}

We start with the calculation of the first term
:
\begin{align}
&B_{k}^{(1)}= \\
&\sum\limits_{m=1}^{K} \sum\limits_{n=1}^{K} \sqrt{\frac{\eta_m \eta_n}{ \left(u_{p,m}^{2}\!+\!\delta_{p,m}^{2}\right)\left(u_{p,n}^{2}\!+\!\delta_{p,n}^{2}\right) }}
\mathbb{E}\left\{ {\hat{\bar{\bf g}}}_k^T   {\hat{\bar{\bf g}}}_m^*  {\hat{\bar{\bf g}}}_n^T   {\hat{\bar{\bf g}}}_k^*\right\}.\nonumber
\end{align}

Let us focus on the evaluation of  $\mathbb{E}\left\{ {\hat{\bar{\bf g}}}_k^T   {\hat{\bar{\bf g}}}_m^*  {\hat{\bar{\bf g}}}_n^T   {\hat{\bar{\bf g}}}_k^*\right\}$.

a) for $m\ne n \ne k$, we have
\begin{align}
& \mathbb{E}\left\{ {\hat{\bar{\bf g}}}_k^T   {\hat{\bar{\bf g}}}_m^*  {\hat{\bar{\bf g}}}_n^T   {\hat{\bar{\bf g}}}_k^*\right\}
=\text{tr}\left(  \mathbb{E}\left\{ {\hat{\bar{\bf g}}}_m^*  \right\}  \mathbb{E}\left\{ {\hat{\bar{\bf g}}}_n^T  \right\}  \mathbb{E}\left\{ {\hat{\bar{\bf g}}}_k^* {\hat{\bar{\bf g}}}_k^T  \right\} \right)\\
&=N_{\text{RF}}^2 u_{p,m} u_{p,n}  u_{p,k}^{2}+ N_{\text{RF}}  u_{p,m} u_{p,n}\delta_{p,k}^{2}  .\nonumber
\end{align}

b) for $m\ne n = k$, we have
\begin{align}
& \mathbb{E}\left\{ {\hat{\bar{\bf g}}}_k^T   {\hat{\bar{\bf g}}}_m^*  {\hat{\bar{\bf g}}}_n^T   {\hat{\bar{\bf g}}}_k^*\right\}
=\text{tr}\left(  \mathbb{E}\left\{ {\hat{\bar{\bf g}}}_m^*  \right\}    \mathbb{E}\left\{  {\hat{\bar{\bf g}}}_k^T {\hat{\bar{\bf g}}}_k^* {\hat{\bar{\bf g}}}_k^T  \right\} \right).
\end{align}

Decomposing ${\hat{\bar{\bf g}}}_k$ into ${\bf u}_{p,k}=u_{p,k}{\bf 1 }_{N_{\text{RF}}}^T$ and ${\hat{\bar{\bf g}}}_{\text{v},k}\sim \mathcal{CN} ({\bf 0}, \delta_{p,k}^{2} {\bf I}_{N_{\text{RF}}} )$, we have
\begin{align}
& \mathbb{E}\left\{ {\hat{\bar{\bf g}}}_k^T   {\hat{\bar{\bf g}}}_m^*  {\hat{\bar{\bf g}}}_n^T   {\hat{\bar{\bf g}}}_k^*\right\} \\
&=\text{tr}\left(  {\bf u}_{p,m}^*      \mathbb{E}\left\{
 {\bf u}_{p,k}^T {\bf u}_{p,k} {\bf u}_{p,k}^T\!+\!{\hat{\bar{\bf g}}}_{\text{v},k}^T {\hat{\bar{\bf g}}}_{\text{v},k}^* {\bf u}_{p,k}^T\!+\!_{p,k}^T {\hat{\bar{\bf g}}}_{\text{v},k}^* {\hat{\bar{\bf g}}}_{\text{v},k}^T   \right\} \right) \nonumber \\
 &=N_{\text{RF}}^{2} u_{p,m} u_{p,k}^{3}+N_{\text{RF}}^{2} u_{p,m} u_{p,k} \delta_{p,k}^{2}+N_{\text{RF}}  u_{p,m} u_{p,k} \delta_{p,k}^{2}.\nonumber
\end{align}

c) for $n\ne m = k$, we have
\begin{align}
& \mathbb{E}\left\{ {\hat{\bar{\bf g}}}_k^T   {\hat{\bar{\bf g}}}_m^*  {\hat{\bar{\bf g}}}_n^T   {\hat{\bar{\bf g}}}_k^*\right\} 
=\mathbb{E}\left\{ {\hat{\bar{\bf g}}}_n^T   {\hat{\bar{\bf g}}}_k^*  {\hat{\bar{\bf g}}}_k^T   {\hat{\bar{\bf g}}}_k^*  \right\} \\ 
&=\text{tr}\left(  \mathbb{E}\left\{ {\hat{\bar{\bf g}}}_n^T  \right\}    \mathbb{E}\left\{  {\hat{\bar{\bf g}}}_k^* {\hat{\bar{\bf g}}}_k^T {\hat{\bar{\bf g}}}_k^*  \right\} \right) \nonumber\\
&=\text{tr}{\left(     \mathbb{E}\left\{
 {\bf u}_{p,k}^T {\bf u}_{p,k} {\bf u}_{p,k}^T\!+\!{\hat{\bar{\bf g}}}_{\text{v},k}^T {\hat{\bar{\bf g}}}_{\text{v},k}^* {\bf u}_{p,k}^T\!+\!{\bf u}_{p,k}^T {\hat{\bar{\bf g}}}_{\text{v},k}^* {\hat{\bar{\bf g}}}_{\text{v},k}^T   \right\} {\bf u}_{p,n}^*    \right)}^H \nonumber\\
 &=N_{\text{RF}}^{2} u_{p,n} u_{p,k}^{3}+N_{\text{RF}}^{2} u_{p,n} u_{p,k} \delta_{p,k}^{2}+N_{\text{RF}}  u_{p,n} u_{p,k} \delta_{p,k}^{2}.\nonumber
\end{align}

 d) for $m=n \ne k$, we have
\begin{align}
& \mathbb{E}\left\{ {\hat{\bar{\bf g}}}_k^T   {\hat{\bar{\bf g}}}_m^*  {\hat{\bar{\bf g}}}_n^T   {\hat{\bar{\bf g}}}_k^*\right\}
=\text{tr}\left(  \mathbb{E}\left\{ {\hat{\bar{\bf g}}}_m^* {\hat{\bar{\bf g}}}_m^T  \right\}     \mathbb{E}\left\{ {\hat{\bar{\bf g}}}_k^* {\hat{\bar{\bf g}}}_k^T  \right\} \right)\\
&=\text{tr}\left( \left\{{\bf u}_{p,m}^*{\bf u}_{p,m}^T+{\delta}_{p,m}^{2} {\bf I}_{N_{\text{RF}}} \right\} \left\{ {\bf u}_{p,k}^*{\bf u}_{p,k}^T+{\delta}_{p,k}^{2} {\bf I}_{N_{\text{RF}}}  \right\} \right) \nonumber \\
&=\!u_{p,m}^{2} u_{p,k}^2 N_{\text{RF}}^2 \!+\!u_{p,m}^{2} \delta_{p,k}^2 N_{\text{RF}}
\!+\! \delta_{p,m}^{2} u_{p,k}^2 N_{\text{RF}} \!+\!\delta_{p,m}^{2} \delta_{p,k}^2 N_{\text{RF}}.\nonumber
\end{align}

e) for $m=n = k$, we have
\begin{align}
& \mathbb{E}\left\{ {\hat{\bar{\bf g}}}_k^T   {\hat{\bar{\bf g}}}_m^*  {\hat{\bar{\bf g}}}_n^T   {\hat{\bar{\bf g}}}_k^*\right\}
= \mathbb{E}\left\{ {\hat{\bar{\bf g}}}_k^T   {\hat{\bar{\bf g}}}_k^*  {\hat{\bar{\bf g}}}_k^T   {\hat{\bar{\bf g}}}_k^*\right\} \\
&= \mathbb{E}\left\{ {\left|  {\bf u}_{p,k}^T {\bf u}_{p,k}^* +{\hat{\bar{\bf g}}}_{\text{v},k}^T{\bf u}_{p,k}^* +{\bf u}_{p,k}^T{\hat{\bar{\bf g}}}_{\text{v},k}^* +{\hat{\bar{\bf g}}}_{\text{v},k}^T{\hat{\bar{\bf g}}}_{\text{v},k}^* \right|}^2    \right\} \nonumber \\
&=\mathbb{E}\left\{  {\left|  {\bf u}_{p,k}^T {\bf u}_{p,k}^* \right|}^2
 + {\left| {\hat{\bar{\bf g}}}_{\text{v},k}^T{\bf u}_{p,k}^* \right|}^2
+{\left| {\bf u}_{p,k}^T {\hat{\bar{\bf g}}}_{\text{v},k}^* \right|}^2 \right\} \nonumber\\
&\ \ \ \  +\mathbb{E}\left\{  {\left| {\hat{\bar{\bf g}}}_{\text{v},k}^T {\hat{\bar{\bf g}}}_{\text{v},k}^*  \right|}^2
+2 {\bf u}_{p,k}^T {\bf u}_{p,k}^* {\hat{\bar{\bf g}}}_{\text{v},k}^T {\hat{\bar{\bf g}}}_{\text{v},k}^*
\right\} \nonumber \\
&=N_{\text{RF}}^2  u_{p,k}^4 \!+\! 2N_{\text{RF}}  u_{p,k}^2  \delta_{p,k}^2
\!+\!\left( N_{\text{RF}}^2\! +\!N_{\text{RF}} \right) \delta_{p,k}^4\!+\!2 N_{\text{RF}}  u_{p,k}^2  \delta_{p,k}^2  \nonumber\\
&=N_{\text{RF}}^2  u_{p,k}^4+ 4N_{\text{RF}}  u_{p,k}^2  \delta_{p,k}^2
+\left( N_{\text{RF}}^2 +N_{\text{RF}} \right) \delta_{p,k}^4. \nonumber
\end{align}

Combining a) ,b) , c), d) and e) together, we obtain
\begin{align}
B_{k}^{(1)}&=N_{\text{RF}}^2 { \left\{\sum\limits_{i=1}^{K} \sqrt{\frac{\eta_i}{u_{p,i}^{2}\!+\!\delta_{p,i}^{2}}} u_{p,k} u_{p,i}\!+\!\sqrt{\frac{\eta_k}{u_{p,k}^{2}\!+\!\delta_{p,k}^{2}}} \delta_{p,k}^{2} \right\}}^2 \nonumber \\
&+N_{\text{RF}} \left(u_{p,k}^2+\delta_{p,k}^2 \right)  \sum\limits_{i=1}^{K}
 \frac{\eta_i}{u_{p,i}^{2}+\delta_{p,i}^{2}} \delta_{p,i}^2 \nonumber\\
 &+N_{\text{RF}}\delta_{p,k}^{2} {\left\{ \sum\limits_{i=1}^{K} \sqrt{\frac{\eta_i}{u_{p,i}^{2}+\delta_{p,i}^{2}}} u_{p,i}  \right\}}^2.
\end{align}

Then, we calculate $B_{k}^{(2)}$:
\begin{align}
&B_{k}^{(2)}=\mathbb{E}\left\{{\left|  {\bf e}_k^T \sum\limits_{i=1}^{K}  \sqrt{\frac{\eta_i}{u_{p,i}^{2}+\delta_{p,i}^{2}}}  {\hat{\bar{\bf g}}}_i^* \right|}^2 \right\} \\
&=  \sum\limits_{i=1}^{K}  \sum\limits_{j=1}^{K}  \sqrt{\frac{\eta_i \eta_j}{ \left(u_{p,i}^{2}+\delta_{p,i}^{2}\right)\left(u_{p,j}^{2}+\delta_{p,j}^{2}\right) }}
\mathbb{E}\left\{ {\bf e}_{k}^{T} {\hat{\bar{\bf g}}}_i^* {\hat{\bar{\bf g}}}_j^T {\bf e}_{k}^{*} \right\}
 \nonumber\\
 &=\!N_{\text{RF}} \delta_{e,k}^2 {\left(\! \sum\limits_{i=1}^{K} \!\sqrt{\!\frac{\eta_i  u_{p,i}^2}{u_{p,i}^{2}\!+\!\delta_{p,i}^{2}}} \!\right)\!}^2
 \!+\!N_{\text{RF}} \delta_{e,k}^2 \sum\limits_{i=1}^{K}  \frac{\eta_i}{u_{p,i}^{2}\!+\!\delta_{p,i}^{2}} \delta_{p,i}^2. \nonumber
\end{align}

Noticing that $B_k=\rho \left( B_k^{(1)} + B_k^{(2)} \right)-{\left|A_k \right|}^2$, we obtain
\begin{align}
B_k&= N_{\text{RF}} \rho \delta_{p,k}^{2} {\left\{ \sum\limits_{i=1}^{K} \sqrt{\frac{\eta_i}{u_{p,i}^{2}+\delta_{p,i}^{2}}} u_{p,i}  \right\}}^2 \\
&+\!N_{\text{RF}} \rho \left(u_{p,k}^2\!+\!\delta_{p,k}^2 \right)  \sum\limits_{i=1}^{K}
 \frac{\eta_i}{u_{p,i}^{2}\!+\!\delta_{p,i}^{2}} \delta_{p,i}^2 \!+\!N_{\text{RF}}\rho \delta_{e,k}^2\nonumber
 \\
&\! \times\!
 {\left(\! \sum\limits_{i=1}^{K} \sqrt{\frac{\eta_i}{u_{p,i}^{2}\!+\!\delta_{p,i}^{2}}} u_{p,i} \!\right)}^2\!+\!N_{\text{RF}} \rho \delta_{e,k}^2 \sum\limits_{i=1}^{K}  \frac{\eta_i}{u_{p,i}^{2}\!+\!\delta_{p,i}^{2}} \delta_{p,i}^2. \nonumber
\end{align}

Recall that $u_{p,k}^2=\alpha_k u^2$ and $\delta_{p,k}^2 +\delta_{e,k}^2 = \alpha_k \delta^2$. The above equation can be rewritten as
\begin{align}
B_k&=\alpha_k \rho \delta^2  N_{\text{RF}} {\left( \sum\limits_{i=1}^{K} \sqrt{\frac{\eta_i u_{p,i}^2}{u_{p,i}^{2}+\delta_{p,i}^{2}}} \right)}^2 \\
&+\alpha_k \rho \left(u^2+\delta^2\right) N_{\text{RF}} \sum\limits_{i=1}^{K}  \frac{\eta_i \delta_{p,i}^2} {u_{p,i}^{2}+\delta_{p,i}^{2}}.\nonumber
\end{align}

Combining 1) and 2), we obtain the desired result.

\section{Proof of Theorem \ref{t2}} \label{A5}
As  $\rho_p \rightarrow \infty $, we have
$u_{p,i}=\sqrt{\alpha_i}u$ and $  \delta_{p,i}^{2}=\alpha_i \delta^2$,
based on which, the achievable rate of the $k$-th user can be expressed as
\begin{align}
& R_k= \\
& \log_2 \left\{1+\frac{ \frac{N_{\text{RF}}^{2} \rho \alpha_k }{u^2+\delta^2}
 {\left( \sum\limits_{i=1}^{K} \sqrt{\eta_i} u^2+\sqrt{\eta_k} \delta^2  \right)}^2}
 {1+\frac{N_{\text{RF}} \rho \alpha_k }{u^2+\delta^2}u^2 \delta^2 {\left(\sum\limits_{i=1}^{K}\sqrt{\eta_i}  \right)}^2 +N_{\text{RF}} \rho \alpha_k \delta^2 \sum\limits_{i=1}^{K} \eta_i  } \right\}.\nonumber
\end{align}

Noticing that $\sum\limits_{i=1}^{K} \eta_i =1$, we have
\begin{align} \label{E15}
&R_k\!= \\
&\!\log_2 \!\left\{\!1\!+\!\frac{ \frac{N_{\text{RF}}^{2} \rho \alpha_k }{u^2\!+\!\delta^2}
 {\left(\! \sum\limits_{i=1}^{K} \sqrt{\eta_i} u^2\!+\!\sqrt{\eta_k} \delta^2  \!\right)}^2}
 {1\!+\!\frac{N_{\text{RF}} \rho \alpha_k }{u^2\!+\!\delta^2}u^2 \delta^2 {\left(\!\sum\limits_{i=1}^{K}\sqrt{\eta_i}  \!\right)}^2 \!+\!N_{\text{RF}} \rho \alpha_k \delta^2  } \!\right\}.\nonumber
\end{align}
Comparing $\sum\limits_{i=1}^{K} \sqrt{\eta_i} u^2$ with  $\sqrt{\eta_k} \delta^2$, we have
$
\frac{\sum\limits_{i=1}^{K} \sqrt{\eta_i} u^2}{\sqrt{\eta_k} \delta^2 }=\frac{L \tilde{u}_{0}}{1-\tilde{u}_{0}}
{ (1+\sum\limits_{i \ne k}^{K}\sqrt{\frac{\eta_i}{\eta_k}}  )}^2
$
, which shows that $\sum\limits_{i=1}^{K} \sqrt{\eta_i} u^2$ is much greater than $\sqrt{\eta_k} \delta^2$, due to the fact that $L$ and $K$ are large in general.
Thus, ignoring the term $\sqrt{\eta_k} \delta^2$ in (\ref{E15}), we have the following approximation:
\begin{align}  \label{E16}
&R_k \approx \\
&\log_2 \left\{1+\frac{ \frac{N_{\text{RF}}^{2} \rho \alpha_k }{u^2+\delta^2}u^4
 {\left( \sum\limits_{i=1}^{K} \sqrt{\eta_i}  \right)}^2}
 {1+\frac{N_{\text{RF}} \rho \alpha_k }{u^2+\delta^2}u^2 \delta^2 {\left(\sum\limits_{i=1}^{K}\sqrt{\eta_i}  \right)}^2 +N_{\text{RF}} \rho \alpha_k \delta^2  } \right\}.\nonumber
\end{align}

Next, we consider the maximum power control problem, which  can be formulated as
\begin{align}
  \begin{array}{ll}
    \max\limits_{\left\{ {\eta}_{i} \right\}}
& \min\limits_{i=1,...,K} {R}_{i},  \\
 \operatorname{s.t.} & \begin{array}[t]{lll}
             \sum\limits_{i=1}^{K} {\eta}_{i}  = 1, \  {\eta}_{i} \geq 0, i=1,...,K. 
           \end{array}
  \end{array}
\end{align}

According to (\ref{E16}), we can observe
\begin{align}
\min\limits_{i=1,...,K} {R}_{i}= R_{i^*}, i^*=\underset{i=1,...,K}{\arg} \min \alpha_i,
\end{align}
based on which, the above optimization problem can be rewritten as
\begin{align} \label{E17}
  \begin{array}{ll}
    \max\limits_{\left\{ {\eta}_{i} \right\}}
& \log_2 \left\{\!1\!+\!\frac{ \frac{N_{\text{RF}}^{2} \rho \alpha }{u^2\!+\!\delta^2}u^4
 {\left( \sum\limits_{i=1}^{K} \sqrt{\eta_i}  \right)}^2}
 {1\!+\!\frac{N_{\text{RF}} \rho \alpha}{u^2\!+\!\delta^2}u^2 \delta^2 {\left(\sum\limits_{i=1}^{K}\sqrt{\eta_i}  \right)}^2\! +\!N_{\text{RF}} \rho \alpha \delta^2  } \!\right\},  \\
 \operatorname{s.t.} & \begin{array}[t]{lll}
             \sum\limits_{i=1}^{K} {\eta}_{i}  = 1  
             {\eta}_{i} \geq 0, i=1,...,K,
           \end{array}
  \end{array}
\end{align}
where we define $\alpha \triangleq \underset{k=1,...,K}{\min} \alpha_k$.

Noticing that the objective function is an increasing function with respect to $\sum\limits_{i=1}^{K} \sqrt{\eta_i}$, the optimization problem  is equivalent to
\begin{align}
  \begin{array}{ll}
    \max\limits_{\left\{ {\eta}_{i} \right\}}
&\sum\limits_{i=1}^{K} \sqrt{\eta_i} ,  \\
 \operatorname{s.t.} & \begin{array}[t]{lll}
             \sum\limits_{i=1}^{K} {\eta}_{i}  = 1 ,\ 
             {\eta}_{i} \geq 0, i=1,...,K.
           \end{array}
  \end{array}
\end{align}

Denote $x_i \triangleq  \sqrt{\eta_i}$. Then we have a convex problem:
\begin{align}
  \begin{array}{ll}
    \max\limits_{\left\{ x_i \right\}}
&\sum\limits_{i=1}^{K} x_i ,  \\
 \operatorname{s.t.} & \begin{array}[t]{lll}
             \sum\limits_{i=1}^{K} x_i^2 = 1, \ 
             {x}_{i} \geq 0, i=1,...,K.
           \end{array}
  \end{array}
\end{align}

By applying KKT conditions, we can obtain the optimal power control coefficients $\eta_i=\frac{1}{K}, i=1,...,K$.
To this end, substituting the optimal coefficients into the objective function of (\ref{E17}), we complete our proof.

\section{Proof of Theorem \ref{t3}} \label{A6}
Starting from Theorem  \ref{T1}, we have
\begin{align}
&\underset{N_{\text{RF}} \to \infty } {R_k} =\\
&\log_2 \left(1+\frac{     N_{\text{RF}}  {\!\left\{\!\sum\limits_{i\!=\!1}^{K} \sqrt{\frac{\eta_i}{u_{p,i}^{2}\!+\!\delta_{p,i}^{2}}} u_{p,k} u_{p,i}\!+\!\sqrt{\frac{\eta_k}{u_{p,k}^{2}\!+\!\delta_{p,k}^{2}}} \delta_{p,k}^{2} \!\right\}}^2 }    {\alpha_k  \delta^2 {\left(\! \sum\limits_{i\!=\!1}^{K} \sqrt{\frac{\eta_i u_{p,i}^2}{u_{p,i}^{2}\!+\!\delta_{p,i}^{2}}} \!\right)}^2
\!+\!\alpha_k   \left(\!u^2\!+\!\delta^2\!\right) \sum\limits_{i\!=\!1}^{K}  \frac{\eta_i \delta_{p,i}^2}{u_{p,i}^{2}\!+\!\delta_{p,i}^{2}}} \!\right). \nonumber
\end{align}

Recall that $u_{p,k}=\sqrt{\alpha_k} u$, and the above equation can be written as
\begin{align} \label{E21}
&\underset{N_{\text{RF}} \to \infty } {R_k} =\\
&\log_2 \left(1+\frac{     N_{\text{RF}}  {\left\{\sum\limits_{i=1}^{K} \sqrt{\frac{\eta_i u_{p,i}^2}{u_{p,i}^{2}+\delta_{p,i}^{2}}} u +\sqrt{\frac{\eta_k \delta_{p,k}^{4}}{ \alpha_k \left(u_{p,k}^{2}+\delta_{p,k}^{2}\right)}} \right\}}^2 }    {  \delta^2 {\left( \sum\limits_{i=1}^{K} \sqrt{\frac{\eta_i u_{p,i}^2}{u_{p,i}^{2}+\delta_{p,i}^{2}}} \right)}^2
+   \left(u^2+\delta^2\right) \sum\limits_{i=1}^{K}  \frac{\eta_i \delta_{p,i}^2}{u_{p,i}^{2}+\delta_{p,i}^{2}}} \right).\nonumber
\end{align}

Next, we try to deal with the maximum power control problem:
\begin{align}
  \begin{array}{ll}
    \max\limits_{\left\{ {\eta}_{i} \right\}}
& \min\limits_{k=1,...,K} \underset{N_{\text{RF}} \to \infty } {R_k},  \\
 \operatorname{s.t.} & \begin{array}[t]{lll}
             \sum\limits_{i=1}^{K} {\eta}_{i}  = 1, \ 
             {\eta}_{i} \geq 0, i=1,...,K,
           \end{array}
  \end{array}
\end{align}
which is equivalent to
\begin{align}
  \begin{array}{ll}
    \max\limits_{\left\{ {\eta}_{i} \right\}}
& \min\limits_{k=1,...,K} {SINR}_{k} ,  \\
 \operatorname{s.t.} & \begin{array}[t]{lll}
             \sum\limits_{i=1}^{K} {\eta}_{i}  = 1 ,\ 
             {\eta}_{i} \geq 0, i=1,...,K,
           \end{array}
  \end{array}
\end{align}
where
\begin{align}
  {SINR}_{k}=\frac{     N_{\text{RF}}  {\left\{\sum\limits_{i=1}^{K} \sqrt{\frac{\eta_i u_{p,i}^2}{u_{p,i}^{2}+\delta_{p,i}^{2}}} u +\sqrt{\frac{\eta_k \delta_{p,k}^{4}}{ \alpha_k \left(u_{p,k}^{2}+\delta_{p,k}^{2}\right)}} \right\}}^2 }    {  \delta^2 {\left( \sum\limits_{i=1}^{K} \sqrt{\frac{\eta_i u_{p,i}^2}{u_{p,i}^{2}+\delta_{p,i}^{2}}} \right)}^2
+   \left(u^2+\delta^2\right) \sum\limits_{i=1}^{K}  \frac{\eta_i \delta_{p,i}^2}{u_{p,i}^{2}+\delta_{p,i}^{2}}}.  
\end{align}

%

With optimal power control coefficients, the following equation holds:
$
{SINR}_{1}=...={SINR}_{k}=...={SINR}_{K}
$,
which can be simplified as
\begin{align} \label{E19}
 & \frac{\eta_1 \delta_{p,1}^4}{\alpha_1 \left(u_{p,1}^{2} +\delta_{p,1}^{2} \right)}=...=\frac{\eta_k \delta_{p,k}^4}{\alpha_k \left(u_{p,k}^{2} +\delta_{p,k}^{2} \right)}\\
 & =...= \frac{\eta_K \delta_{p,K}^4}{\alpha_K \left(u_{p,K}^{2} +\delta_{p,K}^{2} \right)}.\nonumber
\end{align}

Utilizing (\ref{E19}) and noticing $\sum\limits_{k=1}^{K} \eta_k=1$, we can obtain the optimal power control coefficient
\begin{align} \label{E20}
\eta_k=\frac{ \alpha_k \left(u_{p,k}^{2} +\delta_{p,k}^2\right) }{\phi \delta_{p,k}^4},\ \phi=\sum\limits_{k=1}^{K} \frac{ \alpha_k \left(u_{p,k}^{2} +\delta_{p,k}^2 \right)}{   \delta_{p,k}^4}.
\end{align}

Then, substituting (\ref{E20}) into (\ref{E21}), we have
\begin{align}\label{E22}
R=\log_2 \left(1+\frac{N_{\text{RF}} {\left(u \sum\limits_{i=1}^{K} \frac{ \sqrt{\alpha_i}  u_{p,i}}{\delta_{p,i}^{2}}  +1\right)}^2   }
{\delta^2 {\left(\sum\limits_{i=1}^{K} \frac{ \sqrt{\alpha_i}  u_{p,i}}{\delta_{p,i}^{2}} \right)}^2 +
 \left(u^2+\delta^2 \right) \sum\limits_{i=1}^{K} \frac{\alpha_i}{\delta_{p,i}^{2}} } \right).
\end{align}

Finally, substituting the results given by Proposition \ref{p2} into (\ref{E22}) yields the desired result.

\section{Proof of Theorem \ref{t4}}\label{A7}

 As $L \to \infty$, we have $\delta^2=L \beta^2 \left( 1-\tilde u_0^2 \right) \to \infty$.
Then, using the results given in Proposition \ref{p2}, we have
$
\delta_{p,i}^{2}=\alpha_i \delta^2 \frac{\tau_p \rho_p \alpha_k}{\frac{1}{\delta^2}+\tau_p \rho_p \alpha_i \delta^2 }
\approx \alpha_i \delta^2.
$

Leveraging the above equation and noticing that $u_{p,i}=\sqrt{\alpha_i} u$, we obtain
\begin{align} \label{E13}
R_k=\log_2 \left\{1+\frac{ \frac{N_{\text{RF}}^{2} \rho \alpha_k }{u^2+\delta^2}
 {\left( \sum\limits_{i=1}^{K} \sqrt{\eta_i} u^2+\sqrt{\eta_k} \delta^2  \right)}^2}
 {1+\frac{N_{\text{RF}} \rho \alpha_k }{u^2+\delta^2}u^2 \delta^2 {\left(\sum\limits_{i=1}^{K}\sqrt{\eta_i}  \right)}^2 +N_{\text{RF}} \rho \alpha_k \delta^2  } \right\}.
\end{align}

Substituting $u=L \beta \tilde u_0$ and $\delta^2=L \beta^2-L\beta \tilde u_0^2$ into (\ref{E13}) and after some manipulations, we express $R_k$ as  (\ref{EF2}) given on the top of the next page.
\newcounter{mytempeqncnt2}
\begin{figure*}[!t]
\normalsize
\setcounter{mytempeqncnt2}{\value{equation}}
\begin{align} \label{EF2}
R_k=\log_2 \left\{1+\frac{N_{\text{RF}}  \beta^2 L {\left\{
\sum\limits_{i=1}^{K} \sqrt{\eta_i} \tilde u_0^2 L+
\sqrt{\eta_k} \left( 1-\tilde u_0^2\right)\right\}}^2}
{\beta^2 \tilde u_0^2\left\{ {\left(\sum\limits_{i=1}^{K} \sqrt{\eta_i}\right)}^2 +1\right\}\left(1-\tilde u_0^2 \right)L^2+
\beta^2 {\left(1-\tilde u_0^2\right)}^2L +
\frac{\beta \tilde u_0^2+1-\tilde u_0^2}{N_{\text{RF}} \alpha_k \rho} } \right\}.
\end{align}
\vspace{-0.8cm}
\end{figure*}

Neglecting the small items that do not scale with $L^2$, the above equation can be simplified as
\begin{align} \label{E23}
R_k=\log_2 \left\{1+\frac{N_{\text{RF}} L \tilde u_0^2 {\left( \sum\limits_{i=1}^{K} \sqrt{\eta_i}\right)}^2 }{\left( 1-\tilde u_0^2 \right) \left\{ {\left( \sum\limits_{i=1}^{K} \sqrt{\eta_i}\right)}^2+1 \right\}} \right\},
\end{align}
which is an increasing function with respect to $ \sum\limits_{i=1}^{K} \sqrt{\eta_i}$.
Then, following the similar process in the proof of Theorem \ref{t2}, we can obtain the optimal maximum power control coefficients $\eta_k=\frac{1}{K}, k=1,...,K$.

To this end, substituting the optimal power control coefficients into (\ref{E23}) yields the desired result.
\end{appendices}


\bibliographystyle{IEEEtran}
\bibliography{references}{}
\end{document}